\newtheorem{theorem}{Theorem}
\newtheorem{definition}{Definition}
\newtheorem{proposition}{Proposition}
\begin{document}

\title{A Lattice-Based Embedding Method for Reversible Audio Watermarking}

\author{Junren Qin, Shanxiang Lyu, Jiarui Deng,  Xingyuan Liang, Shijun Xiang, Hao Chen
\IEEEcompsocitemizethanks{\IEEEcompsocthanksitem The authors are with the College of Information Science and Technology,  and the College of Cyber Security, Jinan University, Guangzhou 510632, China.
Corresponding author: Shanxiang Lyu (E-mail: lsx07@jnu.edu.cn).
}
}


\IEEEtitleabstractindextext{%
\begin{abstract}
Reversible audio watermarking (RAW) is a promising technique in various applications. To simultaneously meet the demand of achieving high imperceptibility and robustness,  this paper proposes a novel RAW scheme based on lattices. The scheme is referred to as  Meet-in-the-Middle Embedding (MME), in which the lattice quantization errors are properly scaled and added back to the quantized host signals. Simulations show that MME excels in a wide range of metrics including signal-to-watermark ratio (SWR), objective difference grade (ODG), and bit error rate (BER).
\end{abstract}

\begin{IEEEkeywords}
Reversible Audio Watermarking (RAW), lattices, signal-to-watermark ratio (SWR), bit error rate (BER).
\end{IEEEkeywords}}

\maketitle

\IEEEdisplaynontitleabstractindextext
\IEEEpeerreviewmaketitle

\IEEEraisesectionheading{\section{Introduction}\label{sec:introduction}}

\IEEEPARstart{R}{eversible} 
data  hiding (RDH) is a technique where a payload is embedded into host data such that the consistency of the host is perfectly preserved and the host data are restored during the extraction of the payload \cite{shi2016reversible}. Nowadays, audio is widely spread on the Internet and has been one of the main digital multimedia carriers. Its associated Reversible Audio Watermarking (RAW) techniques are beneficial for archiving, transmitting, and authenticating high-quality audio data that contains metadata, secret data, and so forth \cite{DBLP:journals/sigpro/HuaHSGT16}.

Similar to other types of multimedia, a critical step of RAW is to design a proper reversible function. 
Varying on how the reversible function is designed,
conventional RDH methods include difference expansion (DE) \cite{1227616}, prediction-error expansion (PEE) \cite{1421361,WU2014387}, histogram shifting (HS) \cite{ni2006reversible}. A common feature of these methods is to embed some clues of the cover object into the watermarked object, such that the cover object can be restored. Unfortunately, 
these methods are mostly only suitable for a finite-alphabet cover domain (e.g., the domain of $\lbrace0,...,255\rbrace$ in a gray-scale pixel), which implies that they are more suitable for images rather than audio.

In recent years, the celebrated quantization index modulation (QIM) \cite{chen2001quantization} has also been adapted for RDH \cite{ko2012nested,peng2011effective,9328388}, in which the cover domain more generally includes the real-valued domain where the digital audio resides.
Ko \textit{et al.} \cite{ko2012nested} developed a reversible watermarking algorithm based on nested QIM via the iterative process that reduces the normalized error. Peng \emph{et al.} \cite{peng2011effective} made an improvement in RDH for one dimensional scalar-QIM by maintaining the relative distance, and extended it into $N$-dimensional region nesting model \cite{9328388}. These methods are collectively referred to as improved-QIM (IQIM), as is named in \cite{peng2011effective}. It is noteworthy that applying IQIM to digital audio is straightforward, although they generally fail to meet the demand of high robustness and imperceptibility. The reason of no robustness is that the scaling factor of IQIM is fixed as $1-\frac{1}{2^b}$. The reason of high distortion is more involved: i) The constructed difference vector is rather sub-optimal. ii) The embedding method is based on the integer lattice $\mathbb{Z}^n$, whose distortion performance is worse than other optimal lattices.

The robustness of the embedded message is important in the sense that
the cover object  may go through some attacks such as additive noises. In this regard, robust RAW has been investigated in \cite{nishimura2013reversible,LIANG2020107584,nishimura2016reversible}. To be concise,
Nishimura \textit{et al.} \cite{nishimura2013reversible} proposed a robust RAW scheme by combining quantization index modulation (QIM) and amplitude expansion, and also a scheme based on error-expansion of linear prediction \cite{nishimura2016reversible}  for segmental audio and G.711 speech. Liang \textit{et al.} \cite{LIANG2020107584} proposed a robust RAW scheme based on high-order difference statistics.  Nevertheless, these works overemphasize the robustness of embedded messages, but ignore or even sacrifice its imperceptibility. 
Since the sensibility of human auditory system (HAS),
the watermarked cover object should not cause 
perceivable changes.

To address the above issues, this paper makes a retrospective and prospective revisit to RDH from the perspective of lattices \cite{zamir2014lattice,DBLP:journals/tit/LingB14,DBLP:journals/tit/LyuC019}, with the hope of designing an efficient RAW scheme that simultaneously achieves good robustness and imperceptibility. Lattices are discrete additive subgroups of the Euclidean vector space, and QIM is in essence quantization based on nested lattice codes \cite{moulin2005data,zamir2014lattice}. The reversibility, robustness and imperceptibility features of QIM can then be better accommodated by using more advanced lattice coding skills. Specifically, the contributions of this paper, along with the highlights, are summarized as follows.

\begin{itemize}
	\item First, we propose a Meet-in-the-Middle Embedding (MME) RAW scheme. Within the QIM framework, there exists a difference vector between the 
	cover vector and the quantized cover vector. By properly scaling the difference vector and adding it back to the quantized cover vector, it becomes possible for the receiver to estimate the cover vector with the aid of the scaled difference vector. As the reversibility of MME hinges on the properly choosing the scaling factor, we derive the feasible range of the scaling factor.
	\item Second, we analyze the performance of MME in terms of distortion and robustness.
	More specifically, we compare the signal-to-watermark ratio (SWR) and generalized signal-to-noise ratio (GSNR) between MME and IQIM \cite{peng2011effective}. 
	\item Finally, we make a comprehensive comparison between MME and other state-of-the-art RAW methods, and the   
	numerical simulations show that MME excels in the robustness and imperceptibility metrics such as 
	SWR, GSNR, objective difference grade (ODG), and bit error rate (BER). The simulated results ideally align with our theoretical analyses.
\end{itemize}

Notation: Matrices and column vectors are denoted by
uppercase and lowercase boldface letters. We write $\lfloor \cdot \rfloor$ for the floor function, and $|\cdot|$ for the  absolute value.

\section{Preliminaries}
\subsection{QIM}\label{Sec.QIM}
The one-bit scalar-QIM is perhaps the most popular version of QIM in the data hiding community \cite{moulin2005data}. Its rationale can be explained by a one-bit embedding example in Fig. \ref{Fig. QIM embed model}. Denote circle and cross positions in Fig. \ref{Fig. QIM embed model} as two sets $\Lambda_0$ and $\Lambda_1$, respectively.
Given a host/cover sample $s\in\mathbb{R}$ and a one-bit message $m \in \{0,1\}$, the watermarked value is simply moving $s$ to the nearest point in  $\Lambda_0$ when $m=0$, and to the nearest point in  $\Lambda_1$ when $m=1$.


\begin{figure}[t!]
	\centering
	\includegraphics[width=0.4\textwidth]{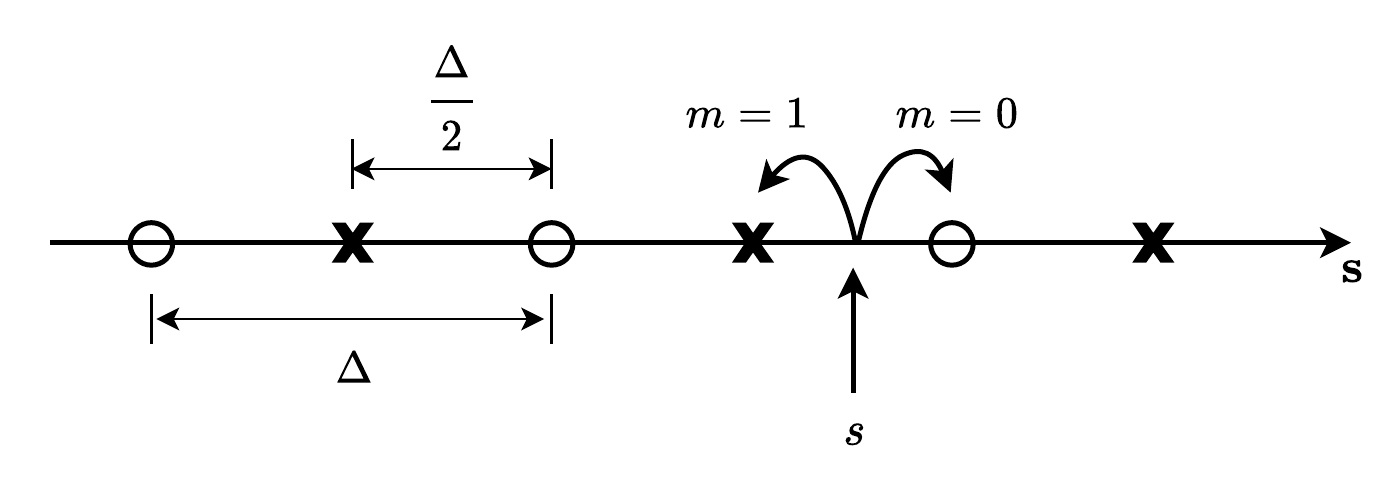} 
	\caption{Embedding one bit into a sample with original QIM.} 
	\label{Fig. QIM embed model}
\end{figure}

Define $Q(s)=\Delta\lfloor s/\Delta\rfloor$ with $\Delta$ being a step-size parameter.
Then the embedding process can be described as 
\begin{equation}\label{Eq. conventional QIM embedding}
	s_{\mathrm{QIM}}\triangleq Q_m(s)=Q(s-d_m)+d_m,~m\in \{0,1\},
\end{equation} 
where $d_0=-(\Delta/4)$, $d_1=\Delta/4$,  $\Lambda_0=d_0+\Delta\mathbb{Z}$ and $\Lambda_1=d_1+\Delta\mathbb{Z}$. 

Assume that the transmitted $s_{\mathrm{QIM}}$ has undergone the contamination of an additive noise term $n$, then at the receiver's side we have: $y= s_{\mathrm{QIM}}+n$. A minimum distance decoder is therefore given by
\begin{equation}\label{Eq. QIM extract}
	\hat{m}=\mathop{arg\min}_{m\in\{0,1\}}\left[\mathop{\min}_{s\in\Lambda_m}|y-s|\right].
\end{equation}
If $|n|<\Delta/4$, the $\hat{m}$ is correct.

Regarding the embedding distortion, as shown in Fig. \ref{Fig. QIM distortion}, the maximum error caused by embedding is $\Delta/2$. If the quantization errors are distributed uniformly over $[-(\Delta/2),(\Delta/2)]$, then the mean-squared embedding distortion is $D=\Delta^2/12$.
\begin{figure}[t!]
	\centering
	\includegraphics[width=0.5\textwidth]{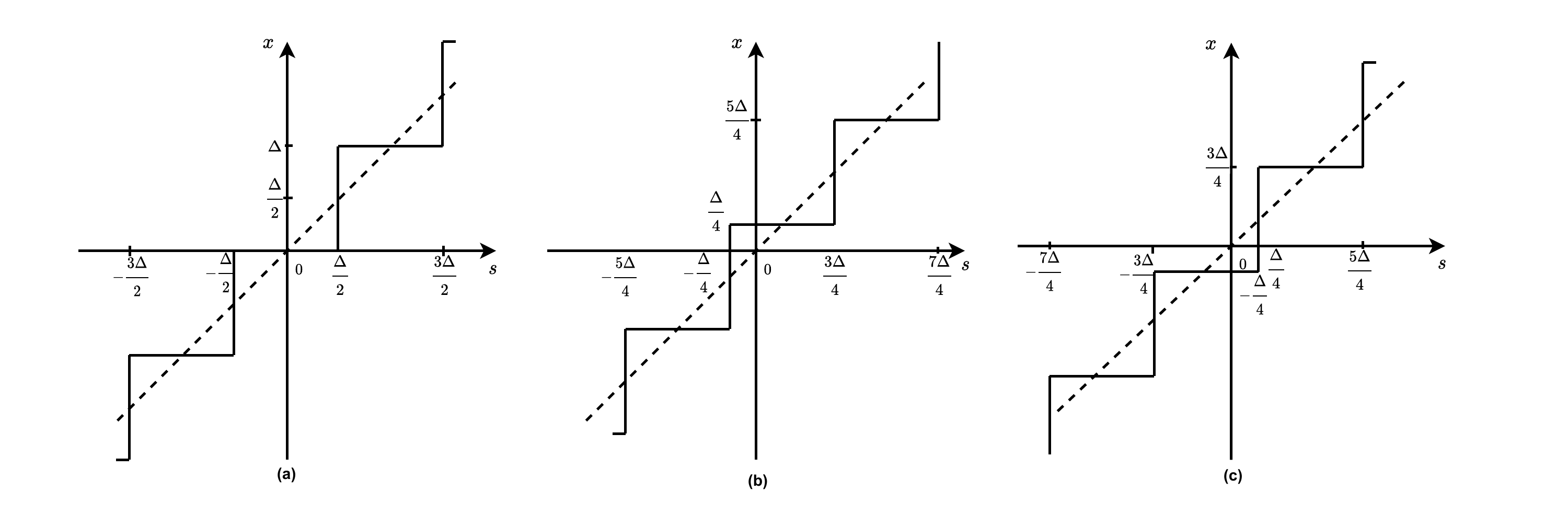} 
	\caption{Selection of watermarked signal $x$ with given $s$ and $m \in \{0,1\}$ in one-bit scalar-QIM. (a) Prototype symmetric function. (b) $x=Q_m(s)$ with $m = 0$. (c) $x=Q_m(s)$ with $m = 1$.} 
	\label{Fig. QIM distortion}
\end{figure}


\subsection{IQIM}\label{IQIM}

Inspired by conventional QIM, a reversible version of QIM called improved-QIM (IQIM) has been proposed in \cite{peng2011effective}. Its embedding and extraction processes can be described as follows.

\subsubsection{Embedding}
\noindent \textbf{Step i)} Calculate the number of quantization interval $\gamma$ and relative distance of $s$ in the $\gamma^{th}$ quantization interval $r$ by:
\begin{equation} \label{prepare step}
	\left\{
	\begin{aligned}
		\gamma&=\left\lfloor\frac{s}{2^b\Delta}\right\rfloor\\
		r&=s-2^b\gamma\Delta
	\end{aligned},\right.
\end{equation}
where $b$ is the bit length of a watermark $m$, and $\Delta$ is again the step size of quantization.

\noindent \textbf{Step ii)} To embed the watermark $m$, the final embedded point $s_{\mathrm{IQIM}}$ is set as
\begin{equation}\label{Eq. REF of IQIM}
	s_{\mathrm{IQIM}}=2^b\gamma\Delta+m \Delta+\frac{r}{2^b}.
\end{equation} 

\subsubsection{Extraction}
\noindent \textbf{Step i)} With either a clean or a noisy version of $s_{\mathrm{IQIM}}$ denoted as $y$, calculate the number of quantization interval $\gamma_w$ and the relative distance $r_w$ by
\begin{equation}
	\left\{
	\begin{aligned}
		\gamma_w&=\left\lfloor\frac{y}{\Delta}\right\rfloor\\
		r_w&=y-\gamma_w\Delta
	\end{aligned}
	.\right.
\end{equation}

\noindent \textbf{Step ii)} Estimate the watermark $m$ by
\begin{equation}
	\hat{m}=\gamma_w-2^b\lfloor\frac{\gamma_w}{2^b}\rfloor.
\end{equation}

\noindent \textbf{Step iii)} Estimate the   cover $s$ by
\begin{equation}
	\hat{s}=2^b\lfloor\frac{\gamma_w}{2^b}\rfloor\Delta+2^br_w.
\end{equation}

If $y$ is noiseless, one may verify that $\hat{m}=m$ and $\hat{s}=s$. For noisy $y$, although $\hat{s}$ becomes inaccurate, $\hat{m}$ may still be reliable if the noise pollution is small enough. In addition,
IQIM can also be described by an $N$-dimension integer set $\mathbb{Z}^N$, which is no more than independently performing embedding and extraction in each dimension. For instance, the embedding function w.r.t. the watermark $\mathbf{m}$ is given by
\begin{align}
	\mathbf{s}_{\mathrm{IQIM}}=\beta\left[2^b\Delta \left[\gamma_1~\cdots~\gamma_N\right]^\top+\frac{\Delta\mathbf{m}}{\beta}\right]+ (1-\beta)\mathbf{s},
\end{align}
where $\beta=1-\frac{1}{2^b}$ denotes the scaling factor in IQIM.

\subsection{Lattices}
This paper will use some concepts from lattices to arrive at simpler description of the algorithm and more elegant analysis. Hereby we review some basic concepts of lattices and nested lattice codes. 


An $N$-dimensional lattice in $\mathbb{R}^N$ is defined as  $\Lambda=\left\lbrace \mathbf{Gz}|\mathbf{z}\in \mathbb{Z}^N\right\rbrace$, where the full-ranked matrix $\mathbf{G}\in\mathbb{R}^{N\times N}$ represents the generator matrix of $\Lambda$.
For any $\mathbf{x}\in\mathbb{R}^N$, finding its closest vector in a lattice $\Lambda$
is referred to as the closest vector problem (CVP). Based on efficient algorithms to solve CVP \cite{zamir2014lattice}, a nearest neighbor quantizer $Q_\Lambda(\cdot)$ is defined as
\begin{align}
	Q_\Lambda(\mathbf{x})=\mathop{\arg\min}_{\mathbf{\lambda} \in \Lambda} \|\mathbf{x}-\mathbf{\lambda}\|.
\end{align}
Details of some optimal low-dimensional lattices for quantization are listed in Table \ref{Tab. Appendix A}. Based on the CVP quantizer we also define $\rm{mod}(\mathbf{x},\Lambda) = \mathbf{x}-Q_{\Lambda}(\mathbf{x})$.

The set of vectors in $\mathbb{R}^{N}$ that are closer to $\lambda\in\Lambda$ than other vectors in $\Lambda$  is called the Voronoi region $\mathcal{V}_\lambda$, i.e.,
\begin{align}
	\mathcal{V}_\lambda=\{\mathbf{x}:Q_\Lambda(\mathbf{x})=\lambda\}.
\end{align}
The fundamental Voronoi region is represented as
\begin{align}
	\mathcal{V}_\Lambda=\{\mathbf{x}:Q_\Lambda(\mathbf{x})=\mathbf{0}\},
\end{align}
and the volume of the fundamental region is 
\begin{align}
	\rm{Vol}(\mathcal{V}_\Lambda)=\int_{\mathcal{V}_{\Lambda}} d\mathbf{x}=|\rm{det} \mathbf{G}|.
\end{align}
The packing  radius  $r_{\rm{pack}\left(\Lambda\right)}$, represents the radius of the largest sphere contained in $\mathcal{V}_\Lambda$:
\begin{align}
	r_{\rm{pack}(\Lambda)} = \frac{1}{2} \rm{min}_{\lambda \in \Lambda \backslash \{\mathbf{0}\}}\|\lambda\|.
\end{align}
The covering radius $r_{\rm{cov}\left(\Lambda\right)}$ refers to the radius of the smallest sphere containing $\mathcal{V}_\Lambda$:
\begin{equation}
	r_{\rm{cov}\left(\Lambda\right)}=\mathop{\max}_{\mathbf{x} \in \mathbb{R}^N} \mathop{\min}_{\lambda \in \Lambda} \|\mathbf{x}-\mathbf{\lambda}\|.
\end{equation}

In the lattice quantization, a typical and efficient method is using the nested lattices based shaping, called coset coding \cite{zamir2014lattice,moulin2005data}. Two lattices $\Lambda_f$ and $\Lambda_c$ are regarded as nested when these two lattices have a inclusion relation, i.e., $\Lambda_c\subset \Lambda_f$. The lattice $\Lambda_f$ is called the \emph{fine/coding} lattice, and its subset  $\Lambda_c$ is called  the \emph{coarse/shaping} lattice. The generator matrices of $\Lambda_c$ and $\Lambda_f$ which are expressed by $\mathbf{G}_c$ and $\mathbf{G}_f$, respectively, have a relation 
\begin{align}
	\mathbf{G}_c = \mathbf{G}_f \mathbf{J},
\end{align}
where the sub-sampling matrix $\mathbf{J}$ represents \emph{nesting matrix}. 
When $\mathbf{J}$ is an identity matrix, it denotes a self-similar shaping \cite{Kurkoski2018SelfSimilar} operation. When $\mathbf{J}$ is a diagonal matrix with different elements, it denotes a rectangular shaping \cite{Lyu2021RectangularShaping} operation.

\newsavebox{\boxa}
\savebox{\boxa}{$\mathbf{J}=\left[\begin{matrix}4&0\\0&4\end{matrix}\right]$}
\newsavebox{\boxb}
\savebox{\boxb}{$\mathbf{J}=\left[\begin{matrix}2&0\\0&4\end{matrix}\right]$}

\begin{table*}[t!]
	\centering
	\renewcommand\arraystretch{2}
	\caption{The generator matrices of some optimal low-dimensional quantizers \cite{conway2013sphere}.}  
	\label{Tab. Appendix A}  
	\begin{tabular}{c c c c c c }
		\hline
		
		\textbf{N}& Quantizer& Generator Matrix& $r_{pack}(\Lambda)$&$r_{cov}(\Lambda)$& $G(\Lambda)$ \\
		\hline
		
		1& $\mathbb{Z}$& 1
		& $\frac{1}{2}$& $\frac{1}{2}$
		&0.083333 \\
		
		2 & $A_2$& $\left[\begin{matrix}\frac{\sqrt{3}}{2}&0\\\frac{1}{2}&1\end{matrix}\right]$
		& $\frac{1}{2}$& $\frac{\sqrt{3}}{3}$
		& 0.080188 \\
		
		3 & $A_3 \cong D_3$& $\left[\begin{matrix} -1&1&0\\-1&-1&1\\0&0&-1\end{matrix}\right]$
		& $\frac{\sqrt{2}}{2}$& $1$
		& 0.078745 \\
		
		4 & $D_4$& $\left[\begin{matrix} 2&1&1&1\\0&1&0&0\\0&0&1&0\\0&0&0&1\end{matrix}\right]$
		& $\frac{\sqrt{2}}{2}$& $1$
		& 0.076603 \\
		
		8 & $E_8$
		& $\left[\begin{matrix} 
		2&-1&0&0&0&0&0&\frac{1}{2}\\
		0&1&-1&0&0&0&0&\frac{1}{2}\\
		0&0&1&-1&0&0&0&\frac{1}{2}\\
		0&0&0&1&-1&0&0&\frac{1}{2}\\
		0&0&0&0&1&-1&0&\frac{1}{2}\\
		0&0&0&0&0&1&-1&\frac{1}{2}\\
		0&0&0&0&0&0&1&\frac{1}{2}\\
		0&0&0&0&0&0&0&\frac{1}{2}\\
		\end{matrix}\right]$
		& $\frac{\sqrt{2}}{2}$& $1$
		& 0.071682 \\
		\hline
	\end{tabular}
\end{table*}

\section{The Proposed Method}\label{Sec. Mathmetical Model}

\subsection{Lattice-QIM reformulation}\label{Sec.QIM improve}
Following \cite{moulin2005data,9455352,conway2013sphere},  we reformulate   QIM for arbitrary lattices and payloads from the perspective of lattice quantization and coset coding, based on which the description of the proposed RAW scheme can also be simpler.

\subsubsection{Embedding}

\noindent \textbf{Step i)} Choose a pair of nested lattices $\Lambda_f$ and $\Lambda_c \subset \Lambda_f$ in $\mathbb{R}^N$. 
The fine lattice $\Lambda_f$ can be decomposed as the union of $|\rm{det} \mathbf{J}|$ cosets of the coarse lattice $\Lambda_c$:
\begin{align}\label{Eq. lambda f&c}
	\Lambda_f = \bigcup\limits_{i=0}^{|\rm{det} \mathbf{J}|-1} \Lambda_i
	=\bigcup\limits_{\mathbf{d}_i \in \Lambda_f \backslash \Lambda_c} (\mathbf{d}_i + \Lambda_c)
\end{align}
where each coset $\Lambda_i = \Lambda_c  + \mathbf{d}_i$ is a translated coarse lattice and $\mathbf{d}_i$ is called the coset representative of $\Lambda_i$.

\noindent  \textbf{Step ii)} Perform labeling to associate the $i\in \left\lbrace 0, 1, \ldots, |\rm{det} \mathbf{J}|-1 \right\rbrace $  to the $i$th message vector $\mathbf{m}_i \in \mathcal{M}$.

\noindent \textbf{Step iii)}  
Consider a cover/host vector denoted as $\mathbf{s}$. To embed a message vector $\mathbf{m}_i$ over $\mathbf{s}$, the watermarked cover vector is set as
\begin{equation} \label{Eq. QIM_encode}
	\mathbf{s}_{\mathrm{QIM}} \triangleq	Q_{\Lambda_i}(\mathbf{s}) = Q_{\Lambda_c}(\mathbf{s}-\mathbf{d}_i)+\mathbf{d}_i.
\end{equation}

\subsubsection{Decoding}
Let the (possibly) noisy observation of $	\mathbf{s}_{\mathrm{QIM}}$ be $\mathbf{y}$.

\noindent \textbf{Step i)}   Estimate the coset representative $\mathbf{d}_{i}$ of $\Lambda_i$ by
\begin{equation}\label{Eq.qim_decode}
	\hat{\mathbf{d}}_{i}={\rm mod}(Q_{\Lambda_f}(\mathbf{y}),\Lambda_c).
\end{equation}

\noindent \textbf{Step ii)} Perform inverse labeling to turn $\hat{\mathbf{d}}_{i}$ into the estimated message vector   $\hat{\mathbf{m}}_i$.

The extracted message vector is correct as long as $Q_{\Lambda_f}(\mathbf{y})=\mathbf{s}_{\mathrm{MME}}$. Moreover, the information transmission rate per dimension of Lattice-QIM is simply 
\begin{equation} 
	R=\frac{1}{N}\log {|\rm{det} \mathbf{J}|}.
\end{equation}

\subsection{Meet-in-the-middle embedding}\label{SSec. GLMR}

Based on the above, we observe that there exists a difference vector  $\mathbf{e}$ between the cover vector $\mathbf{s}$ and its quantized watermarked vector $	\mathbf{s}_{\mathrm{QIM}}$, i.e., 
\begin{align}
	\mathbf{e} =  \mathbf{s} - Q_{\Lambda_i}(\mathbf{s}).
\end{align}
Obviously the information about $\mathbf{e}$ is lost if we only use $Q_{\Lambda_i}(\mathbf{s})$ as the watermarked vector.

Notice that QIM has certain error tolerance capability. If we treat $\mathbf{e}$ as the ``beneficial noise'' and add it back to $Q_{\Lambda_i}(\mathbf{s})$, then the information about the cover $\mathbf{s}$ can be maintained, and the scheme becomes reversible. The tricky part about adding  ``beneficial noise'' is that, they should be properly scaled to meet several demands.  First, the scaled $\mathbf{e}$  should be small enough such that it does not go beyond the Voronoi region of the fine lattice. Second, the scaled $\mathbf{e}$  should not be too small to avoid exceeding the used representation accuracy of numbers. Details of the proposed scheme is explained as follows.

\begin{definition}[MME Embedding]\label{Def.reversible lattice model definition}
	The watermarked cover vector is set as
	\begin{align}\label{Eq. REF}
		\mathbf{s}_{\mathrm{MME}} = \mathcal{W}(\mathbf{s}) &= Q_{\Lambda_i}\left(\mathbf{s}\right)+\left(1-\alpha\right)(\mathbf{s}-	Q_{\Lambda_i}\left(\mathbf{s}\right)) \notag\\
		&=  \alpha Q_{\Lambda_i}\left(\mathbf{s}\right) + \left(1-\alpha\right) \mathbf{s},
	\end{align}
	in which $\alpha$ is a chosen scaling factor such that 
	$(1-\alpha)(\mathbf{s}-Q_{\Lambda_i}\left(\mathbf{s}\right))\in\mathcal{V}_{\Lambda_f}$.
\end{definition}

\begin{definition}[Noiseless MME Extraction]
	If the receiver's side has $\mathbf{s}_{\mathrm{MME}}$,
	the estimated cover vector can be accurately restored by
	\begin{align}\label{Eq. RF}
		\hat{\mathbf{s}}
		=\mathcal{W}^{-1}\left(\mathbf{s}_{\mathrm{MME}}\right)
		&=\frac{\mathbf{s}_{\mathrm{MME}} -
			Q_{\Lambda_f}\left(\mathbf{s}_{\mathrm{MME}}\right)
		}{1-\alpha} + Q_{\Lambda_f}\left(\mathbf{s}_{\mathrm{MME}}\right)
		\nonumber \\
		&= \frac{1}{1-\alpha}\mathbf{s}_{\mathrm{MME}}-\frac{\alpha}{1-\alpha} Q_{\Lambda_f}\left(\mathbf{s}_{\mathrm{MME}}\right).
	\end{align}
	In this case, the message vector can also be accurately decoded by using the decoder of QIM. 
\end{definition}

In the noisy setting,
consider an additive white Gaussian noise (AWGN) channel in the form of $\mathbf{y}=\mathbf{s}_{\mathrm{MME}}+\mathbf{n}$, where the entries of $\mathbf{n}$ admit a standard Gaussian distribution $\mathcal{N}(0,\sigma_{\mathbf{n}}^2)$. In this case, accurately restring the cover vector is impossible, but the message vector may be accurately decoded by using the decoder of QIM. 
Compared with IQIM, MME is advantageous in the following aspects.
\begin{itemize}
	\item As shown in  Figs. \ref{Fig. embeddin process comparison one dimension} and  \ref{Fig. embeddin process comparison two dimension}, the difference vector of MME can be confined to a small Voronoi cell (hence the name ``meet-in-the-middle'') while IQIM cannot. In the distortion analysis we will shown that this brings certain performance gains. 
	\item
	MME can be applied to general lattices while IQIM is only feasible for the standard integer lattice.
\end{itemize}



\begin{figure}[t!]
	\linespread{1}
	\centering
	\subfigure[]{\includegraphics[width=.45\textwidth]{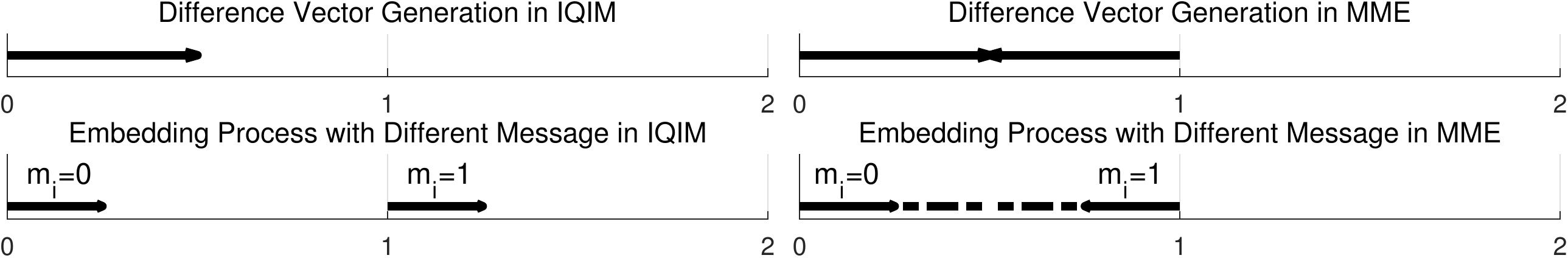}}
	\quad 
	\subfigure[]{\includegraphics[width=.45\textwidth]{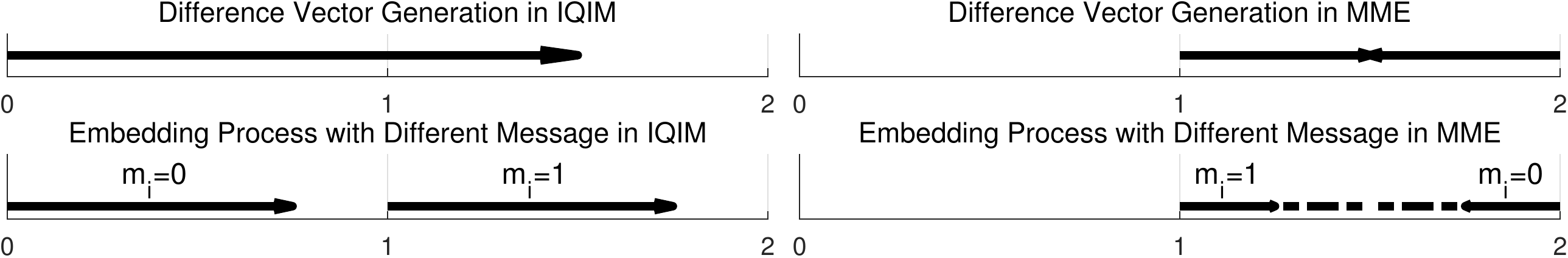}} 
	
	\caption{Comparison of embedding process between IQIM and MME in one-dimension with different host vector $\mathbf{s}$. (a) $\mathbf{s}=0.5$. (b) $\mathbf{s}=1.5$.} 
	\label{Fig. embeddin process comparison one dimension}
\end{figure}
\begin{figure}[t!]
	\linespread{1}
	\centering
	\includegraphics[width=.45\textwidth]{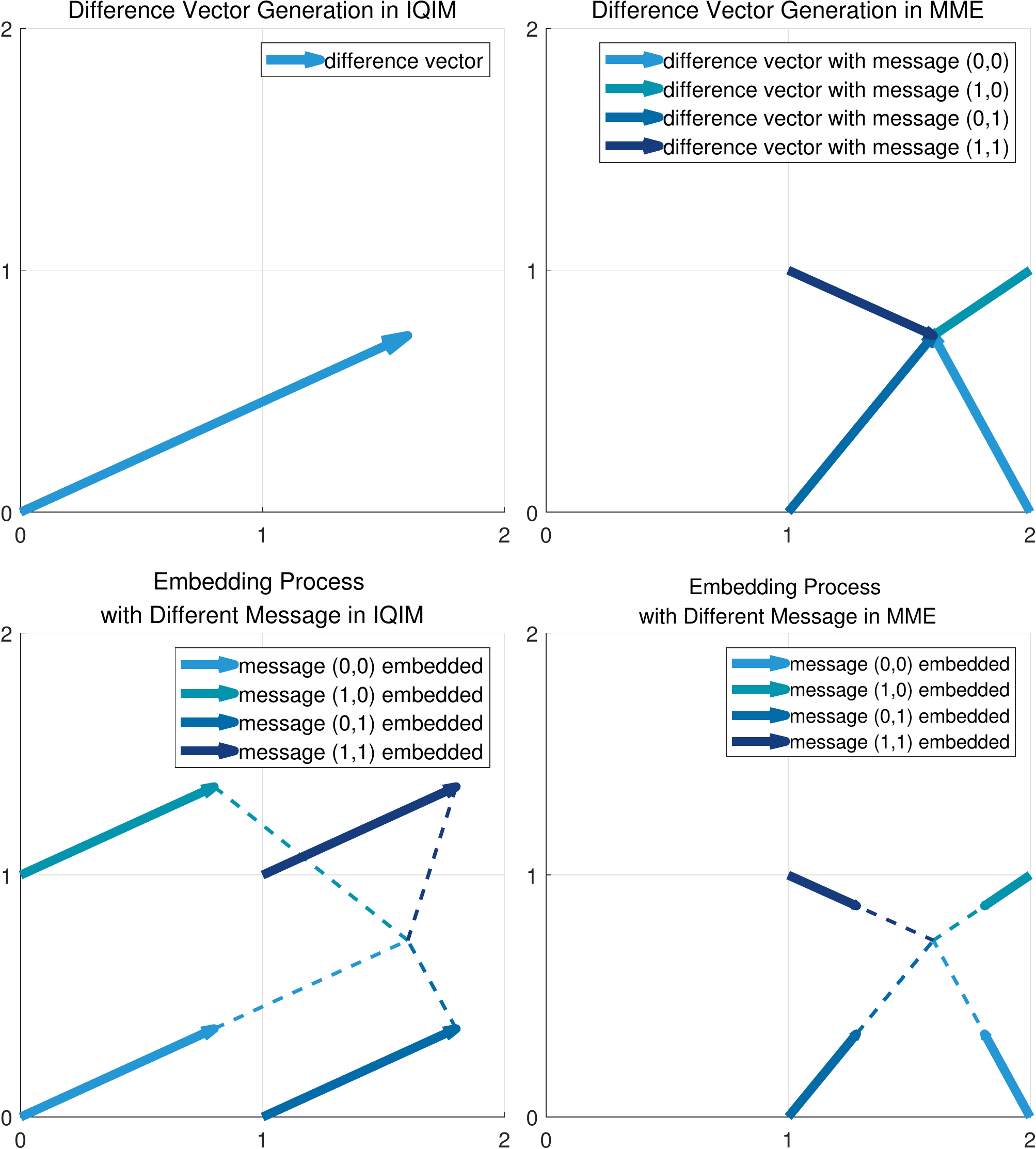}
	
	\caption{Comparison of embedding process between IQIM and MME in two-dimension with $\mathbf{s}=(1.61,0.73)$.} 
	\label{Fig. embeddin process comparison two dimension}
\end{figure}

\subsection{Correctness analysis}
\subsubsection{Reversibility}\label{SSec. Reversibility of GLMR}
For the embedding function Eq. (\ref{Eq. REF}), its definition field $\mathcal{A}$ is noted as
\begin{equation}
	\mathcal{A}=\{\mathbf{s}|\mathbf{s}\in\mathbb{R}^N\}.
\end{equation}
And the value field $\mathcal{B}$ is defined as
\begin{equation}\label{defsujection}
	\mathcal{B}=\{\mathbf{s}_{\mathrm{MME}}|\mathbf{s}_{\mathrm{MME}}=\mathcal{W}\left(\mathbf{s}\right),\mathbf{s}\in\mathcal{A}\}.
\end{equation}
Hereby we introduce a theorem about the one-to-once correspondence between about the cover and the watermarked cover.

\begin{theorem}[Noiseless Recovery]
	\label{Lem. lemma one to one}
	The function $\mathcal{W}$ in MME is bijective.\end{theorem}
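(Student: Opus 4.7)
The plan is to prove bijectivity by exhibiting an explicit left inverse, namely the map $\mathcal{W}^{-1}$ already written down in the Noiseless MME Extraction definition. Surjectivity of $\mathcal{W}:\mathcal{A}\to\mathcal{B}$ is free, since $\mathcal{B}$ is defined as the image of $\mathcal{A}$ under $\mathcal{W}$. Hence the whole content of the theorem is injectivity, which follows as soon as one verifies $\mathcal{W}^{-1}\circ\mathcal{W}=\mathrm{id}_{\mathcal{A}}$.

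First I would reduce the computation to a single key identity:
\begin{equation}
Q_{\Lambda_f}\!\left(\mathbf{s}_{\mathrm{MME}}\right)=Q_{\Lambda_i}(\mathbf{s}).
\end{equation}
To see this, note that $\Lambda_i=\mathbf{d}_i+\Lambda_c\subset\Lambda_f$ by the nested-lattice construction in (\ref{Eq. lambda f&c}), so $Q_{\Lambda_i}(\mathbf{s})$ is itself a point of $\Lambda_f$. Using the definition of $\mathbf{s}_{\mathrm{MME}}$ in (\ref{Eq. REF}),
\begin{equation}
\mathbf{s}_{\mathrm{MME}}-Q_{\Lambda_i}(\mathbf{s})=(1-\alpha)\bigl(\mathbf{s}-Q_{\Lambda_i}(\mathbf{s})\bigr),
\end{equation}
which lies in $\mathcal{V}_{\Lambda_f}$ by the very constraint imposed on $\alpha$ in Definition \ref{Def.reversible lattice model definition}. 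The standard translation-invariance property of nearest-neighbour lattice quantization (if $\boldsymbol{\lambda}\in\Lambda$ and $\mathbf{w}\in\mathcal{V}_{\Lambda}$, then $Q_{\Lambda}(\boldsymbol{\lambda}+\mathbf{w})=\boldsymbol{\lambda}$) then yields the claimed identity.

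With this identity in hand, substituting into the closed-form expression for $\mathcal{W}^{-1}$ gives
\begin{equation}
\mathcal{W}^{-1}\!\bigl(\mathcal{W}(\mathbf{s})\bigr)=\frac{\mathbf{s}_{\mathrm{MME}}-Q_{\Lambda_i}(\mathbf{s})}{1-\alpha}+Q_{\Lambda_i}(\mathbf{s})=\bigl(\mathbf{s}-Q_{\Lambda_i}(\mathbf{s})\bigr)+Q_{\Lambda_i}(\mathbf{s})=\mathbf{s},
\end{equation}
so $\mathcal{W}^{-1}$ is indeed a left inverse of $\mathcal{W}$, forcing $\mathcal{W}$ to be injective and therefore bijective onto $\mathcal{B}$.

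The only subtle step is the quantization identity above; everything else is algebra. Ambiguity would arise if $(1-\alpha)(\mathbf{s}-Q_{\Lambda_i}(\mathbf{s}))$ were to land on the boundary of $\mathcal{V}_{\Lambda_f}$, where $Q_{\Lambda_f}$ is multi-valued; this is ruled out (up to a fixed tie-breaking rule) by reading the condition $(1-\alpha)(\mathbf{s}-Q_{\Lambda_i}(\mathbf{s}))\in\mathcal{V}_{\Lambda_f}$ in Definition \ref{Def.reversible lattice model definition} as selecting the consistent representative. Thus the main conceptual obstacle — ensuring that the quantization at the receiver recovers exactly the coset-quantized lattice point used at the encoder — is precisely what the design constraint on the scaling factor $\alpha$ was built to guarantee.
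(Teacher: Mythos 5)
Your proof is correct, but it takes a genuinely different route from the paper's. The paper argues injectivity directly by a three-way case split on whether $\mathbf{s}_a$ and $\mathbf{s}_b$ lie in different Voronoi cells of $\Lambda_c$, the same cell with different embedded cosets, or the same cell with the same coset, and then disposes of surjectivity by the (trivial) observation that $\mathcal{B}$ is defined as the image of $\mathcal{W}$. You instead exhibit an explicit left inverse and reduce everything to the single quantization identity $Q_{\Lambda_f}(\mathbf{s}_{\mathrm{MME}})=Q_{\Lambda_i}(\mathbf{s})$, which you derive from $Q_{\Lambda_i}(\mathbf{s})\in\Lambda_f$, the constraint $(1-\alpha)(\mathbf{s}-Q_{\Lambda_i}(\mathbf{s}))\in\mathcal{V}_{\Lambda_f}$ of Definition \ref{Def.reversible lattice model definition}, and translation invariance of the nearest-neighbour quantizer. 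Your approach buys two things the paper's does not: it simultaneously validates the extraction formula (\ref{Eq. RF}) rather than treating recovery as a separate unproved claim, and it makes rigorous the step the paper waves through in its second case (``then obviously $\mathcal{W}(\mathbf{s}_a)\neq\mathcal{W}(\mathbf{s}_b)$''), since distinctness there is exactly the statement that $Q_{\Lambda_f}$ separates the shifted Voronoi cells. You also flag the tie-breaking issue on the boundary of $\mathcal{V}_{\Lambda_f}$, which the paper ignores; strictly speaking a fixed tie-breaking convention (e.g.\ a half-open fundamental cell) is needed for $Q_{\Lambda_f}$ to be well defined and for the identity to hold, so your parenthetical is not pedantry but a genuine hypothesis the construction needs. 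The paper's case analysis, for its part, is more elementary in that it never invokes the inverse map, but it proves strictly less per line of argument.
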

\begin{proof}
	It suffices to prove that $\mathcal{W}$ is both injective and  surjective.
	``Injective'' means no two elements in the domain of the function gets mapped to the same image, i.e.,
	for any cover vectors $\mathbf{s}_a,\mathbf{s}_b\in\mathcal{A}$,
	\begin{equation}
		\mathbf{s}_a\neq\mathbf{s}_b\rightarrow \mathcal{W}\left(\mathbf{s}_a\right)\neq\mathcal{W}\left(\mathbf{s}_b\right).
	\end{equation}	
	With
	$\mathbf{s}_a\neq\mathbf{s}_b$, we analyze the relationship between their output $\mathcal{W}\left(\mathbf{s}_a\right)$ and $\mathcal{W}\left(\mathbf{s}_b\right)$ in three cases:
	\begin{itemize}
		\item[1.]$\mathbf{s}_a$ and $\mathbf{s}_b$ belong to two different Voronoi regions $\mathcal{V}_{\lambda_a}$ and $\mathcal{V}_{\lambda_b}$ of the coarse lattice $\Lambda_c$.  It follows from  $\mathcal{V}_{\lambda_a}\cap\mathcal{V}_{\lambda_b}=\emptyset$ that
		$\mathcal{W}\left(\mathbf{s}_a\right)\neq\mathcal{W}\left(\mathbf{s}_b\right)$.
		\item[2.]   $\mathbf{s}_a$ and $\mathbf{s}_b$ belong to the same Voronoi region in $\Lambda_c$ but different watermarks ($Q_{\Lambda_i}\left(\mathbf{s}_a\right)\neq Q_{\Lambda_i}\left(\mathbf{s}_b\right)$). Since the ``self-noise'' does not go beyond the Voronoi cell of a fine lattice, i.e., $(1-\alpha)(\mathbf{s}-Q_{\Lambda_i}\left(\mathbf{s}\right))\in\mathcal{V}_{\Lambda_f}$, then obviously  $\mathcal{W}\left(\mathbf{s}_a\right)\neq\mathcal{W}\left(\mathbf{s}_b\right)$.
		\item[3.]   $\mathbf{s}_a$ and $\mathbf{s}_b$ belong to the same Voronoi region in $\Lambda_c$ and watermark ($Q_{\Lambda_i}\left(\mathbf{s}_a\right)= Q_{\Lambda_i}\left(\mathbf{s}_b\right)$). Since $\mathbf{s}_a\neq\mathbf{s}_b$, one has $(1-\alpha)\mathbf{s}_a\neq(1-\alpha)\mathbf{s}_b$, and they are still unequal after both adding $Q_{\Lambda_i}\left(\mathbf{s}\right)$.
	\end{itemize}
	Combining the above three cases, the injection is proved.
	
	``Surjective'' means that any element in the range of the function is hit by the function.  We prove this by using contradiction.	If $\mathcal{W}$ is  not surjective, i.e., $\exists \mathbf{s}_{\mathrm{MME}}\in\mathcal{B}$, $\forall \mathbf{s}\in\mathcal{A}$ \textit{s.t.} $\mathcal{W}(\mathbf{s})\neq\mathbf{s}_{\mathrm{MME}}$, then we obtain $\mathbf{s}_{\mathrm{MME}}\notin\mathcal{B}$ which   contradicts Eq. (\ref{defsujection}). 
\end{proof}
The above theorem says that $\mathcal{W}$ is a
reversible function, where each element of one set is paired with exactly one element of the other set. Thus the correctness of the estimated covers is guaranteed.

In the noisy scenario of $\mathbf{n}\neq \mathbf{0}$ in $\mathbf{y}=\mathbf{s}_{\mathrm{MME}}+\mathbf{n}$, we argue that the cover vector is also approximately reversible if  $
Q_{\Lambda_f}\left(\hat{\mathbf{v}}\right)=\mathbf{0}$,
where 
\begin{equation}\label{Eq. whole noise2}
	\hat{\mathbf{v}} =  (1-\alpha)\mathbf{e}
	+ \mathbf{n}
\end{equation} 
is referred to as the composite noise vector at the receiver's side. 
To be concise, by using the estimation function
\begin{align}
	\hat{\mathbf{s}}_{\mathrm{noisy}}
	=\mathcal{W}^{-1}\left(\mathbf{y}\right)
	&=\frac{\mathbf{y} -
		Q_{\Lambda_f}\left(\mathbf{y}\right)
	}{1-\alpha} + Q_{\Lambda_f}\left(\mathbf{y}\right),
\end{align}
we have
\begin{equation}
	\hat{\mathbf{s}}_{\mathrm{noisy}}-\mathbf{s}=\frac{1}{1-\alpha}\mathbf{n}.
\end{equation}
It says that, if the composite noise $\hat{\mathbf{v}}$ is small enough to ensure the correctness of the message, then the distance from the noisy estimated cover and the clean cover is only $||{1}/{(1-\alpha)}\mathbf{n}||$.

\subsubsection{Correctness of the estimated messages}\label{SSec. RF}
We have the following theorem about the robustness of the watermarking scheme. Since the ``self-noise'' is within the boundary of the Voronoi region, the estimated messages in the noiseless setting is correct for sure. More generally in the noisy setting, if the composite noise is small, the correctness of the estimated messages is also guaranteed.

\begin{theorem}\label{Lem. lemma recover function 1}
	If $\hat{\mathbf{v}}$ satisfies
	\begin{equation}\label{Eq. noise condition}
		Q_{\Lambda_f}(\hat{\mathbf{v}})=\mathbf{0},
	\end{equation} 
	then the estimated message by using MME is correct.
\end{theorem}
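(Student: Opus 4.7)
The plan is to trace through the QIM-style decoder step by step and show that the hypothesis $Q_{\Lambda_f}(\hat{\mathbf{v}})=\mathbf{0}$ forces it to output the correct coset representative $\mathbf{d}_i$. The proof reduces to a single translation-invariance identity for the fine-lattice quantizer plus bookkeeping with cosets.

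First, I would collect the received vector into a clean ``lattice point plus composite noise'' form. Starting from the embedding rule in Eq. (\ref{Eq. REF}) and $\mathbf{y}=\mathbf{s}_{\mathrm{MME}}+\mathbf{n}$, substituting $\mathbf{e}=\mathbf{s}-Q_{\Lambda_i}(\mathbf{s})$ and using the definition of $\hat{\mathbf{v}}$ in Eq. (\ref{Eq. whole noise2}) yields
\begin{equation*}
\mathbf{y} \;=\; Q_{\Lambda_i}(\mathbf{s}) + (1-\alpha)\mathbf{e} + \mathbf{n} \;=\; Q_{\Lambda_i}(\mathbf{s}) + \hat{\mathbf{v}}.
\end{equation*}

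Next, I would invoke the translation invariance of the fine-lattice quantizer: for any $\boldsymbol{\lambda}\in\Lambda_f$ and any $\mathbf{x}\in\mathbb{R}^N$, $Q_{\Lambda_f}(\boldsymbol{\lambda}+\mathbf{x})=\boldsymbol{\lambda}+Q_{\Lambda_f}(\mathbf{x})$, which is a direct consequence of the fact that translating by a lattice vector simply shifts every Voronoi cell by the same vector. Because $\Lambda_i\subset\Lambda_f$ by the decomposition in Eq. (\ref{Eq. lambda f&c}), the point $Q_{\Lambda_i}(\mathbf{s})$ is itself in $\Lambda_f$, so combining the identity with the hypothesis (\ref{Eq. noise condition}) gives
\begin{equation*}
Q_{\Lambda_f}(\mathbf{y}) \;=\; Q_{\Lambda_i}(\mathbf{s}) + Q_{\Lambda_f}(\hat{\mathbf{v}}) \;=\; Q_{\Lambda_i}(\mathbf{s}).
\end{equation*}

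Finally, I would plug this into the decoder (\ref{Eq.qim_decode}). Since $Q_{\Lambda_i}(\mathbf{s})\in\Lambda_i=\mathbf{d}_i+\Lambda_c$, write $Q_{\Lambda_i}(\mathbf{s})=\mathbf{d}_i+\boldsymbol{\lambda}_c$ with $\boldsymbol{\lambda}_c\in\Lambda_c$; using the same translation-invariance property for $\Lambda_c$ together with $\mathbf{d}_i\in\mathcal{V}_{\Lambda_c}$ (the usual convention for choosing coset leaders), one obtains $\mathrm{mod}(Q_{\Lambda_i}(\mathbf{s}),\Lambda_c)=\mathbf{d}_i$. Hence $\hat{\mathbf{d}}_i=\mathbf{d}_i$, and the inverse labeling step returns $\hat{\mathbf{m}}_i=\mathbf{m}_i$.

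The only real subtlety is the translation-invariance identity, which is the hinge that converts the assumption on $\hat{\mathbf{v}}$ into a statement about $\mathbf{y}$; once it is in place, the rest is essentially a two-line substitution chain, so I expect the proof to be quite short.
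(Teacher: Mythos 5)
Your proof is correct and follows essentially the same route as the paper's: write $\mathbf{y}=Q_{\Lambda_i}(\mathbf{s})+\hat{\mathbf{v}}$, use translation invariance of $Q_{\Lambda_f}$ over the lattice point $Q_{\Lambda_i}(\mathbf{s})\in\Lambda_f$, and then reduce modulo $\Lambda_c$ to recover $\mathbf{d}_i$. The only cosmetic difference is that you apply the hypothesis $Q_{\Lambda_f}(\hat{\mathbf{v}})=\mathbf{0}$ immediately after the quantization step, whereas the paper carries the $Q_{\Lambda_f}(\hat{\mathbf{v}})$ term through the $\mathrm{mod}$ chain and discharges it at the end; your explicit remark that the coset leaders are taken in $\mathcal{V}_{\Lambda_c}$ is a point the paper leaves implicit.
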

\begin{proof}
	At the receiver's side, the noisy observation can be written as   $\mathbf{y}=Q_{\Lambda_i}\left(\mathbf{s}\right)+\hat{\mathbf{v}}$.
	Since $Q_{\Lambda_i}\left(\mathbf{s}\right)\in\Lambda_f=\cup_{\mathbf{d}_i \in \Lambda_f \backslash \Lambda_c} (\mathbf{d}_i + \Lambda_c)$, $Q_{\Lambda_f}(\mathbf{y})$ can be written as
	\begin{equation}
		Q_{\Lambda_f}(\mathbf{y})=Q_{\Lambda_f}(Q_{\Lambda_i}\left(\mathbf{s}\right))+Q_{\Lambda_f}(\hat{\mathbf{v}}).
	\end{equation}
	Then we have
	\begin{align}\label{Eq. decode proof}
		&{\rm mod}(Q_{\Lambda_f}(\mathbf{y}),\Lambda_c)\notag\\
		=&{\rm mod}(Q_{\Lambda_f}(Q_{\Lambda_i}\left(\mathbf{s}\right))
		+Q_{\Lambda_f}(\hat{\mathbf{v}}),\Lambda_c)\notag\\
		=&{\rm mod}({\rm mod}(Q_{\Lambda_f}(Q_{\Lambda_i}\left(\mathbf{s}\right)),\Lambda_c)
		+{\rm mod}(Q_{\Lambda_f}(\hat{\mathbf{v}}),\Lambda_c),\Lambda_c)\notag\\
		=&{\rm mod}(\mathbf{d}_i+Q_{\Lambda_f}(\hat{\mathbf{v}}),\Lambda_c).
	\end{align}
	Thus ${\rm mod}(Q_{\Lambda_f}(\mathbf{y}),\Lambda_c)=\mathbf{d}_i$ if $Q_{\Lambda_f}(\hat{\mathbf{v}})=\mathbf{0}$. Since each coset representative $\mathbf{d}_i$ corresponds a unique message $\mathbf{m}_i$,  the lemma is proved.
\end{proof}


\subsection{Setting the scaling factor} \label{SSec. proposed lower bound}
In this subsection, we will discuss the feasible range of $\alpha$. 
\subsubsection{Lower bound} 
The lower bound of $\alpha$ is determined by ensuring the correctness of estimated messages.  As mentioned in Definition 1, one should have  
$(1-\alpha)(\mathbf{s}-Q_{\Lambda_i}\left(\mathbf{s}\right))\in\mathcal{V}_{\Lambda_f}$. This implies that
\begin{equation}\label{Eq. radius inequality}
	(1-\alpha)\mathcal{V}_{\Lambda_c}\subseteq\mathcal{V}_{\Lambda_f}.
\end{equation}
It suffices to have
\begin{equation}
	(1-\alpha)r_{cov}(\Lambda_c)\leq r_{pack}(\Lambda_f),
\end{equation}
which means 
\begin{equation}\label{alphabound1}
	\alpha\geq 1-\frac{r_{pack}(\Lambda_f)}{r_{cov}(\Lambda_c)}.
\end{equation}

For the special case of self similar shaping, a tighter bound can be derived. To be concise, let $\Lambda_c = \Gamma \Lambda_f$.  It follows from $	(1-\alpha) \Gamma \mathcal{V}_{\Lambda_f}\subseteq\mathcal{V}_{\Lambda_f}$ that
\begin{equation}\label{alphabound2}
	\alpha \geq 1- 1/\Gamma.
\end{equation}

\newsavebox{\boxxa}
\savebox{\boxxa}{$\mathbf{J}=\left[\begin{matrix}4&0\\0&4\end{matrix}\right]$}
\newsavebox{\boxxb}
\savebox{\boxxb}{$\mathbf{J}=\left[\begin{matrix}4&0\\0&4\end{matrix}\right]$}
\newsavebox{\boxxc}
\savebox{\boxxc}{$\mathbf{J}=\left[\begin{matrix}10&0\\0&11\end{matrix}\right]$}
\newsavebox{\boxxd}
\savebox{\boxxd}{$\mathbf{J}=\left[\begin{matrix}2&0\\0&4\end{matrix}\right]$}
\newsavebox{\boxxe}
\savebox{\boxxe}{$\mathbf{J}=\left[\begin{matrix}2&0\\0&4\end{matrix}\right]$}

\begin{figure}[t!]
	\linespread{1}
	\centering
	\subfigure[]{\includegraphics[width=.22\textwidth]{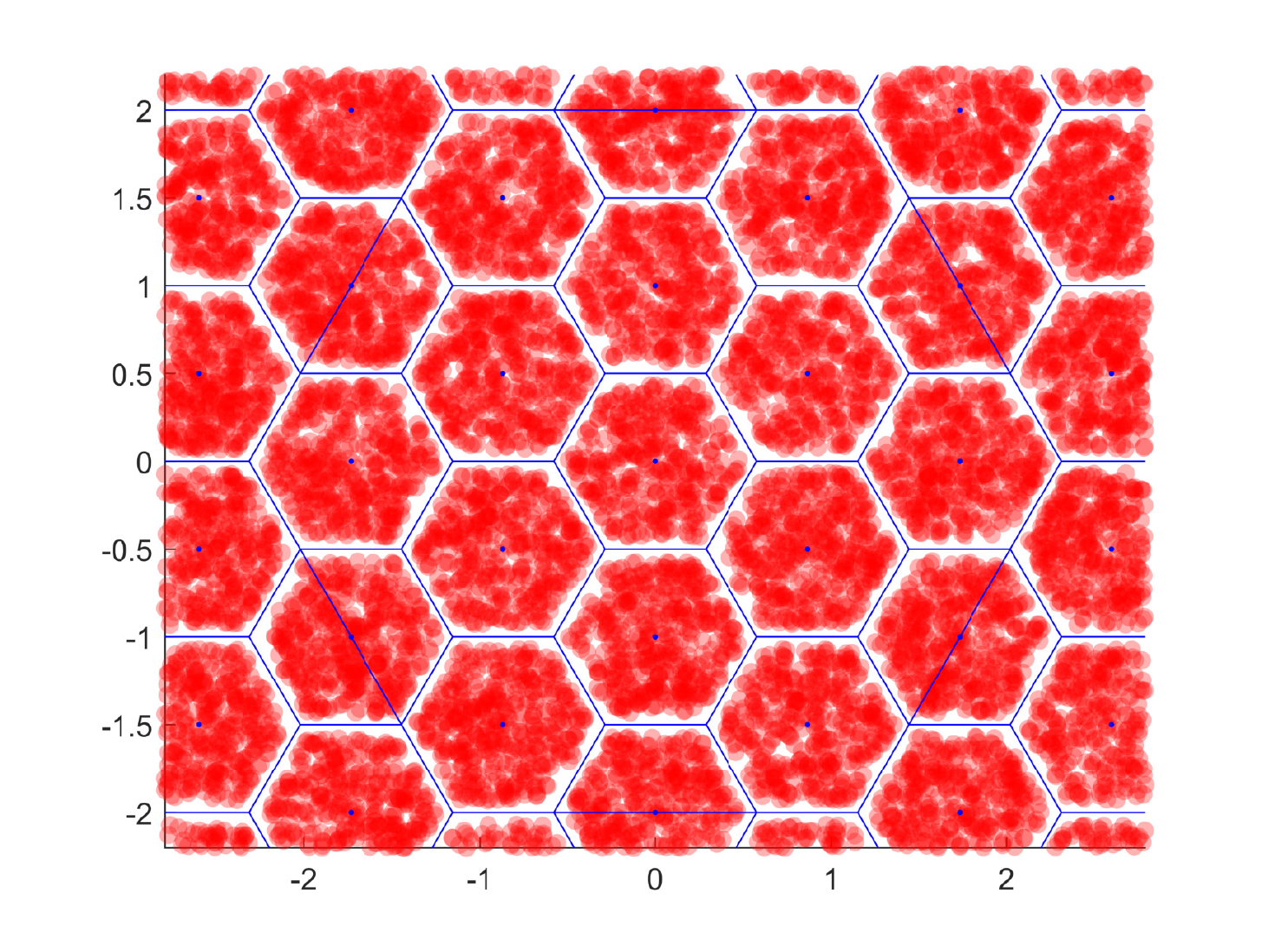}}
	\subfigure[]{\includegraphics[width=.22\textwidth]{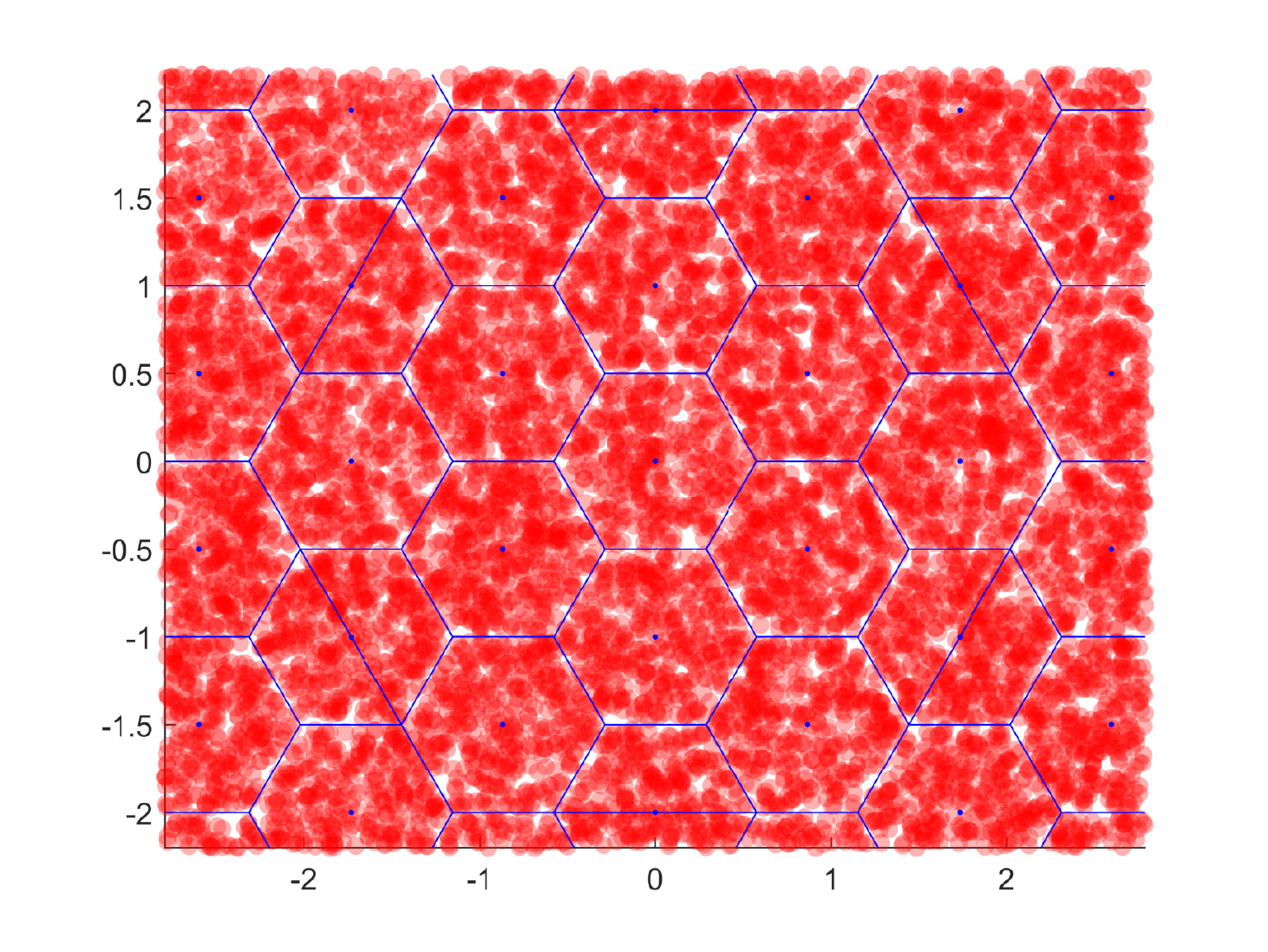}}
	\caption{Distribution of the watermarked cover vector $\mathbf{s}_{\mathrm{MME}}$ with $\Lambda_f$ being the $A_2$ lattice and $\Lambda_c = 4 \Lambda_f$. 
		(a)  $\alpha = 1 - \frac{\sqrt{3}}{6}$ based on Eq. (\ref{alphabound1}). 
		(b)  $\alpha = 3/4$ based on Eq. (\ref{alphabound2}).} 
	\label{Fig. Cases of overstep lower bound}
\end{figure}

Fig. \ref{Fig. Cases of overstep lower bound} depicts the distribution of the watermarked cover vector $\mathbf{s}_{\mathrm{MME}}$ based on the general bound in Eq. (\ref{alphabound1}) and the special-case bound in Eq. (\ref{alphabound2}).

\subsubsection{Upper bound}
The upper bound of $\alpha$ is related to the correct extraction of the covers.  Recall that
the audio data is stored as single or double precision floating-point numbers (as shown in Table \ref{Tab. Appendix B}), and it cannot represent arbitrary real-valued numbers without loss of accuracy. Overflow occurs when the magnitude of a number exceeds the range allowed by the size of the bit field.
To avoid the effect of overflow, we need to have 
\begin{equation}\label{alphaupperbound}
	\alpha\leq 1-2^{-(\mathcal{L}+1)},
\end{equation}
where  $\mathcal{L}$ denotes the mantissa bit length. For example, we have $\alpha \leq 1-2^{-53}$ when using double precision floating-point numbers.

\begin{table*}[t!]
	\centering
	\renewcommand\arraystretch{2}
	\caption{The storage structure of floating-point numbers.}  
	\label{Tab. Appendix B}  
	\begin{tabular}{c c c c c }
		\hline
		
		Data Type& Total length (bit)& Sign length (bit)& Order length (bit)& Mantissa length (bit)\\
		\hline
		Float & 32& 1& 8& 23\\
		Double & 64& 1& 11& 52\\
		\hline
	\end{tabular}
\end{table*}

\section{Performance metrics}\label{SSec. Theoretical Analysis}


\subsection{Embedding distortion}
To evaluate the embedding distortion, we adopt the signal to watermark ratio (SWR) metric defined as
\begin{align}\label{Eq. SWR1}
	{SWR}~(\mathrm{dB})=10\times lg\left(\frac{\sigma^2_\mathbf{s}}{\sigma^2_\mathbf{w}}\right),
\end{align}
where $\sigma^2_\mathbf{s}$, $\sigma^2_\mathbf{w}$ represent the power of the host and the additive watermark, respectively. For MME, we have
\begin{align}
	\sigma^2_\mathbf{w} 	&=   E(\|\alpha \mathbf{e}\|^2)\notag\\
	&=\alpha^2 E(\| \mathbf{e}\|^2)\notag \\
	&= \alpha^2 N G({\Lambda_c}) \rm{Vol}(\mathcal{V}_{\Lambda_c})^{2/N}, \label{eq_sigma_w2}
\end{align} 
in which $G(\Lambda_c)$ represents the normalized second moment of the chosen coarse lattice $\Lambda_c$, and Eq. (\ref{eq_sigma_w2}) is derived from the widely adopted flat-host assumption \cite{DBLP:journals/tcsv/WangZLCH21}. For
the simplest setting of $\Lambda_f = \Delta \mathbb{Z}^N$ and $\Lambda_c = \Delta 2^b \mathbb{Z}^N$, since $G(\Delta 2^b \mathbb{Z}^N)=1/12$, we have
\begin{equation}\label{cubic_dist}
	\sigma_{\mathbf{w}}^2
	=\frac{N\alpha^2 2^{2b}\Delta^2}{12}.
\end{equation}


\begin{proposition}\label{The. greater SWR}
	By setting $\Lambda_f = \Delta \mathbb{Z}^N$, $\Lambda_c = \Delta 2^b \mathbb{Z}^N$, and $\alpha = \beta$, 
	MME has a larger SWR than IQIM.
\end{proposition}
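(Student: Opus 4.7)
The plan is to compute the watermark power $\sigma_{\mathbf{w}}^2$ for both schemes under the same flat-host/uniform-payload convention and to conclude that IQIM always has the larger watermark power, so that by Eq. (\ref{Eq. SWR1}) its SWR is smaller. For MME the work has already been done in Eq. (\ref{cubic_dist}): substituting $\alpha=\beta=1-2^{-b}$ gives $\sigma_{\mathbf{w},\mathrm{MME}}^2 = N(2^b-1)^2\Delta^2/12$.

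For IQIM I would work one coordinate at a time. Using Eq. (\ref{Eq. REF of IQIM}) together with the decomposition $s=2^b\gamma\Delta+r$, the additive watermark collapses to the clean form $w = s_{\mathrm{IQIM}}-s = m\Delta - \beta r$. Invoking the flat-host convention that underlies Eq. (\ref{eq_sigma_w2}), so that $r$ is uniform on $[0,2^b\Delta)$, and taking the $b$-bit payload $m$ to be uniform on $\{0,1,\ldots,2^b-1\}$ independently of $r$, I first note that $E[w]=\Delta(2^b-1)/2-\beta\cdot 2^b\Delta/2=0$, a cancellation that hinges on the identity $\beta\cdot 2^b=2^b-1$. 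Adding the variances of $m\Delta$ and $\beta r$ then gives $\sigma_{\mathbf{w},\mathrm{IQIM}}^2 = N\Delta^2\bigl[(2^{2b}-1)+(2^b-1)^2\bigr]/12$, which after pulling out the common factor $(2^b-1)$ simplifies to $N\cdot 2^b(2^b-1)\Delta^2/6$.

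Forming the ratio of the two closed forms yields $\sigma_{\mathbf{w},\mathrm{IQIM}}^2/\sigma_{\mathbf{w},\mathrm{MME}}^2 = 2\cdot 2^b/(2^b-1)$, which is strictly greater than $1$ for every $b\ge 1$. Plugging this into Eq. (\ref{Eq. SWR1}) gives $\mathrm{SWR}_{\mathrm{MME}}-\mathrm{SWR}_{\mathrm{IQIM}}=10\lg\bigl(2\cdot 2^b/(2^b-1)\bigr)>0$, which is the claim of the proposition.

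The main obstacle I anticipate is not the algebra but the bookkeeping on assumptions: I need to put the IQIM distortion analysis on exactly the same distributional footing as the one Eq. (\ref{eq_sigma_w2}) uses for MME, so that the comparison is fair rather than an artifact of a chosen worst case. Once the flat-host hypothesis (uniform $r$) and the uniform-payload hypothesis (uniform $m$, independent of $r$) are accepted on both sides, the remainder is a routine second-moment computation, and the identity $\beta\cdot 2^b = 2^b-1$ does all the heavy lifting needed to keep the final ratio in a closed and transparent form.
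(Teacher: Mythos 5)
Your proof is correct, and it takes a genuinely different route from the paper's. The paper does not compute $\sigma^2_{\mathbf{w},\mathrm{IQIM}}$ exactly; invoking $x-1<\lfloor x\rfloor\le x$, it replaces the integrand $\bigl[2^b\Delta\gamma_q+m_{i,q}\Delta/\beta-s_q\bigr]^2$ by the constant $\bigl[m_{i,q}\Delta/\beta-2^b\Delta\bigr]^2$, sums over $m_{i,q}$ to obtain the closed form $\tfrac{(2^b)^2(2\cdot 2^b-1)}{6(2^b-1)}N\beta^2\Delta^2$ as a claimed lower bound, and subtracts the MME power. You instead reduce the per-coordinate watermark to the exact expression $w=m\Delta-\beta r$, use $\beta\cdot 2^b=2^b-1$ to get $E[w]=0$, and add the variances of the independent uniform terms to obtain the exact value $\sigma^2_{\mathbf{w},\mathrm{IQIM}}=N\,2^b(2^b-1)\Delta^2/6$, which exceeds $\sigma^2_{\mathbf{w},\mathrm{MME}}=N(2^b-1)^2\Delta^2/12$ by the factor $2\cdot 2^b/(2^b-1)>1$. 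Both arguments sit on the same distributional footing (flat host, uniform payload independent of the host), but yours is the safer of the two: the paper's substitution is not a valid pointwise lower bound, since the quantity being squared ranges over an interval containing zero (it vanishes at $s_q=m_{i,q}\Delta/\beta$, which lies inside the integration range), and indeed the paper's ``lower bound'' $N(2^b-1)(2\cdot 2^b-1)\Delta^2/6$ is strictly larger than the exact value you compute. Your exact calculation avoids that questionable step, still establishes the proposition, and as a bonus yields the precise SWR gap $10\lg\bigl(2\cdot 2^b/(2^b-1)\bigr)$ dB rather than only its sign.
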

\begin{proof}	
	Based on the inequality that
	\begin{equation}
		x-1<\left\lfloor x\right\rfloor\leq x,
	\end{equation}
	the embedding distortion of IQIM can be bounded as 
	\begin{align}
		&	\sigma_{\mathbf{w},\rm{IQIM}}^2 \notag\\
		&=E\left(\beta^2 \left\lVert \mathbf{s} - \left[2^b\Delta \left[\gamma_1~\cdots~\gamma_N\right]^\top+\frac{\Delta\mathbf{m}}{\beta}\right] \right\rVert^2 \right)\notag\\
		&=\frac{\beta^2}{2^b}\sum_{q=1}^{N}\sum_{m_{i,q}=0}^{2^b-1}\int_{0}^{2^b\Delta}\left[2^b\Delta\gamma_q+\frac{m_{i,q}\Delta}{\beta}-s_q\right]^2\notag\\
		&\cdot f(s_q) ds_q\notag\\
		&>\frac{\beta^2}{2^b}\sum_{q=1}^{N}\sum_{m_{i,q}=0}^{2^b-1}\int_{0}^{2^b\Delta}\left[\frac{m_{i,q}\Delta}{\beta}-2^b\Delta\right]^2 f(s_q) ds_q \notag\\
		&=\frac{\left(2^b\right)^2\left(2\cdot 2^b-1\right)}{6\left(2^b-1\right)}N\beta^2\Delta^2.
	\end{align}
	Letting $\alpha = \beta$ in Eq. (\ref{cubic_dist}), we have
	\begin{equation}
		\sigma_{\mathbf{w},\rm{IQIM}}^2 - 	\sigma_{\mathbf{w}}^2 > \frac{\beta^2 2^{2b} \left(3\cdot 2^b-1\right)N\Delta^2}{12\left(2^b-1\right)}>0. 
	\end{equation}
	As IQIM has a larger denominator in SWR, the proposition is proved.
\end{proof}
Proposition \ref{The. greater SWR} justifies the advantage of MME over IQIM: by using a 
better meet-in-the-middle approach to construct the ``beneficial noise'', it enjoys a larger SWR.

\subsection{Robustness of messages}\label{SSec. gsnr}
In addition to the distortion/imperceptibility metric, it is also necessary to evaluate the robustness of MME against additive noises.
Following \cite{moulin2005data}, we define the 
generalized signal-to-noise ratio (GSNR) as
\begin{align}
	{\rm GSNR}&\triangleq\frac{\min_{i,j\in\mathcal{I},i\neq j}\|\mathbf{d}_i-\mathbf{d}_j\|^2}{ E(\| \hat{\mathbf{v}}\|^2) }\notag\\
	&=\frac{4r_{{\rm{pack}}(\Lambda_f)}^2}{E(\| \hat{\mathbf{v}}\|^2)}, \label{eqgsnr_std}
\end{align}
where $\hat{\mathbf{v}}$ is the composite noise vector defined in Eq. (\ref{Eq. whole noise2}). 

Due to the independence of $\mathbf{e}$ and $\mathbf{n}$, we have 
\begin{align}
	E(\| \hat{\mathbf{v}}\|^2) &= 	E(\|(1-\alpha)\mathbf{e}
	+ \mathbf{n} \|^2) \notag \\
	&=(1-\alpha)^2 	E(\| {\mathbf{e}}\|^2) + E(\| {\mathbf{n}}\|^2) \notag \\
	&=(1-\alpha)^2 N G({\Lambda_c}) {\rm{Vol}}(\mathcal{V}_{\Lambda_c})^{2/N} + N \sigma_{\mathbf{n}}^2.\label{eq3GSNR}
\end{align}

By substituting (\ref{eq3GSNR}) into (\ref{eqgsnr_std}), the 
GSNR of MME can be expressed as
\begin{equation}
	{\rm GSNR} = \frac{4r_{{\rm{pack}}(\Lambda_f)}^2}{(1-\alpha)^2 N G({\Lambda_c}) {\rm{Vol}}(\mathcal{V}_{\Lambda_c})^{2/N} + N \sigma_{\mathbf{n}}^2}.
\end{equation}

Since the energy of the self-noise $(1-\alpha) \mathbf{e}$ is smaller than that of IQIM, its composite-noise also features a smaller energy. Therefore, MME has a larger GSNR than IQIM.

\section{Simulations}\label{Sec. equipment}
We carry out some simulations in this section to verify the effectiveness of MME. The simulation setups are summarized as follows.

\begin{table}[t!]
	\renewcommand{\arraystretch}{1}
	\centering  
	\caption{The level of objective difference grade scores \cite{xiang2017digital}.}  
	\label{Tab. objective difference grade}  
	\begin{tabular}{l l l}
		\hline
		ODG &Quality & Impairment\\
		\hline
		~0 & Excellent & Imperceptible\\				
		-1 & Good & Perceptible but not annoying\\				
		-2 & Fair & Slightly annoying\\				
		-3 & Poor & Annoying\\			
		-4 & Bad & Very annoying\\
		\hline
	\end{tabular}\label{ta_odg3}
\end{table} 

\noindent \textbf{Datasets:} In the experiments, two online datasets \cite{database1} and \cite{database2} are adopted, which contain ringtone, natural sound, absolute music, songs, and speeches. They consist of $5$ and $70$ bipolar $16$-bits Waveform Audio File Format (WAV) files, respectively, which are  labeled as No. 1001-1005 and 2001-2070. The cover vectors are generated from these datasets, and the 
messages are generated from uniform random distributions.

\noindent \textbf{Distortion metrics:}
In addition to the SWR metric, the objective difference grade (ODG) \cite{xiang2017digital} is also introduced in the experiments, which is a popular metric to assess accurately the objective auditory perception and is produced by the Perceptual Evaluation of Audio Quality (PEAQ) measure standardized in the ITU-R BS. 1387 \cite{itu20011387}.  The ODG scores are computed by the software downloaded from \cite{PEAQsoftware}, which is the Matlab version of Evaluation of Audio Quality (EAQUAL) \cite{lerch2002zplane}. ODG gives a score in the range of  $[-4.5,0.5]$. 
As shown in Table \ref{ta_odg3}, the audio quality is said to be excellent, good, fair, poor, and bad based on the respective ODG scores.

\noindent \textbf{Robustness metrics:}
In addition to the GSNR metric, we also introduce the bit error rate (BER) metric to exactly show the decoding error rates. 
BER is defined as
\begin{equation}
	BER=\frac{\sum_{k=1}^{M}\sum_{q=1}^{N}\left(\mathbf{x}_{k,q}\oplus \mathbf{x}'_{k,q}\right)}
	{M*N},
\end{equation}
where $\mathbf{x}$ and $\mathbf{x}'$ separately represent the original and estimated messages, while $q$ and $k$ are dimensional and sample identifiers respectively.

\noindent \textbf{Benchmarks and settings:} The state-of-the-art methods in \cite{peng2011effective,LIANG2020107584,nishimura2013reversible} are adopted as the benchmarks, which are respectively named as ``IQIM'', ``Liang'' and ``Nishimura''.  
For IQIM, Nishimura and the proposed MME, unless stated otherwise, we set  $\alpha=0.6569$, $\Delta=2000$, $\Lambda_f=\Delta\mathbb{Z}^2$, and $\Lambda_c=\Delta2^R\mathbb{Z}^2$.  
In addition, ``Liang'' is configured with parameters 
$S=300,~T=7500,~G=20000$.

\subsection{Imperceptibility performance}

\begin{figure}[t!]
	\linespread{1}
	\centering
	\subfigure[]{\includegraphics[width=.35\textwidth]{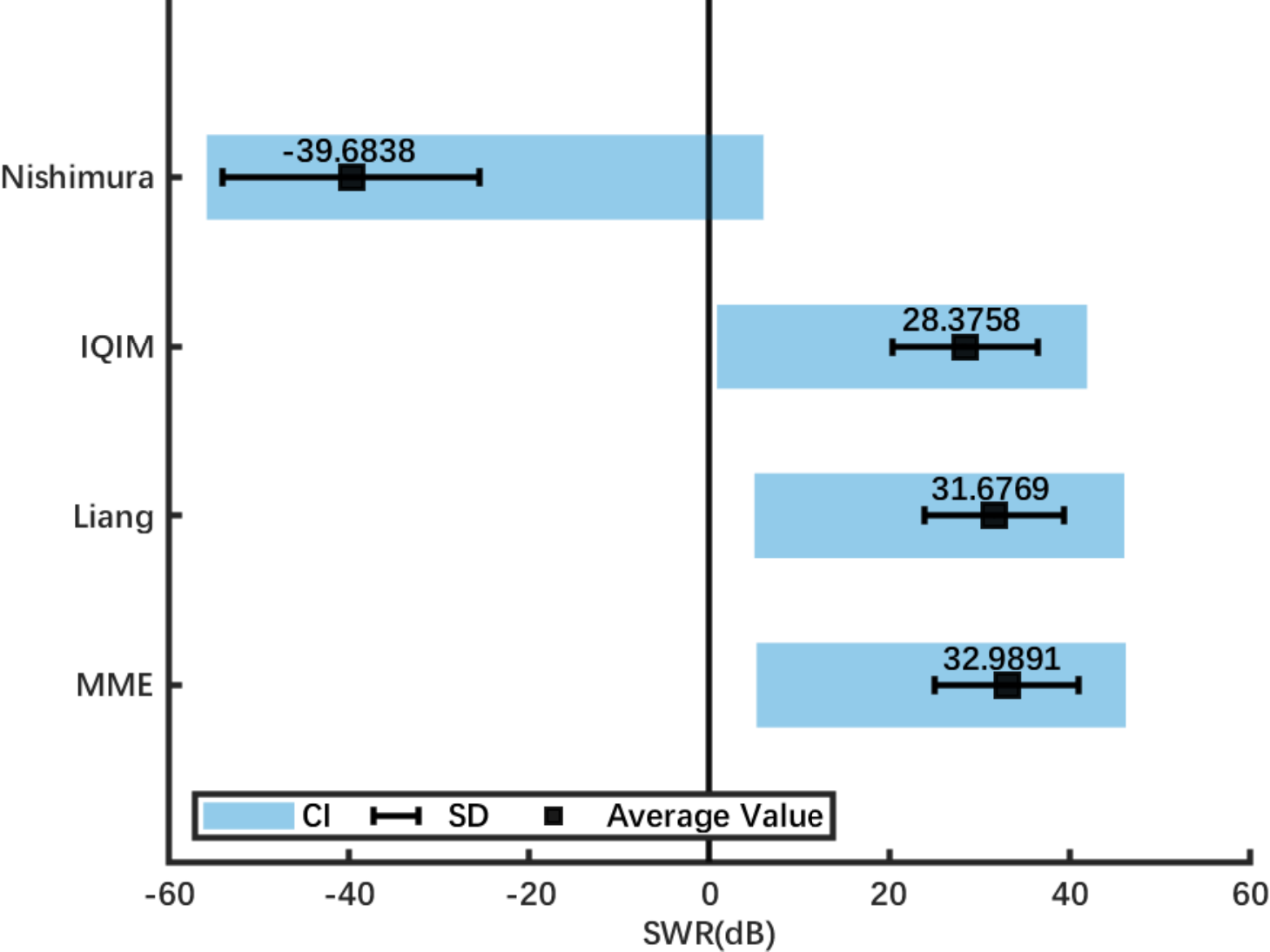}}
	\quad
	\subfigure[]{\includegraphics[width=.35\textwidth]{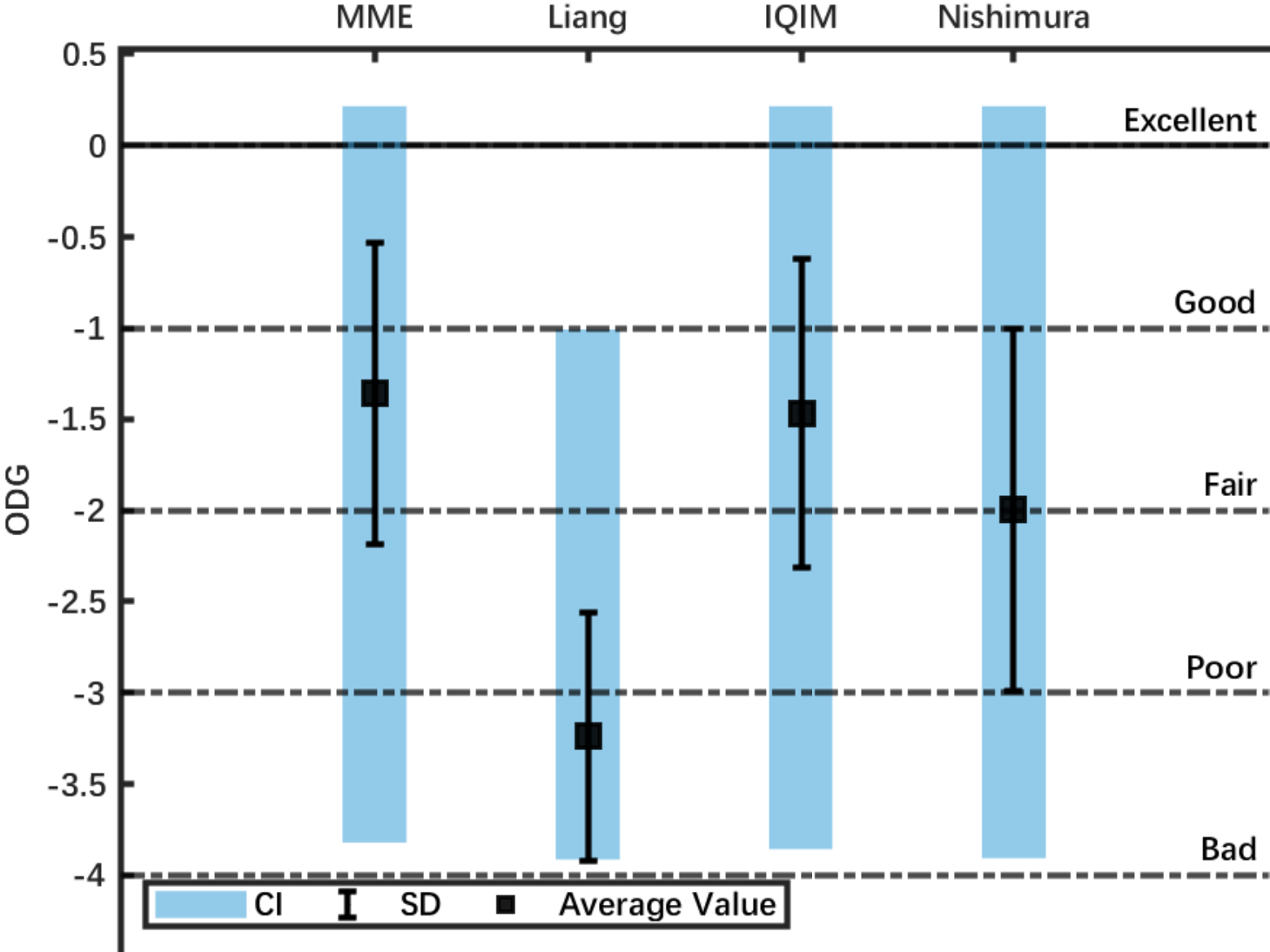}}
	
	\caption{The comparisons between methods in \cite{peng2011effective,LIANG2020107584,nishimura2013reversible} and the proposed MME with different metrics (a) SWR. (b)ODG.} 
	\label{Fig. experiment_1}
\end{figure}

Fig. \ref{Fig. experiment_1}(a) and Fig. \ref{Fig. experiment_1}(b) have respectively depicted the average value of SWR and ODG for the whole dataset.  The blue and black bars represent the confidence interval (CI)  and the range of standard deviation (SD). Fig. \ref{Fig. experiment_1}-(a) shows that the proposed method attains the highest SWR of $32.9891$dB, which is better than the $31.6769$dB of Liang, and the $28.3758$dB of IQIM. Moreover, the Nishimura method fails to work in this setting. Regarding ODG, the proposed method, IQIM and Nishimura work well, and they all belong the the class of QIM-based methods.

\begin{figure}[t!]
	\linespread{1}
	\centering
	\subfigure[]{\centering{\includegraphics[width=.35\textwidth]{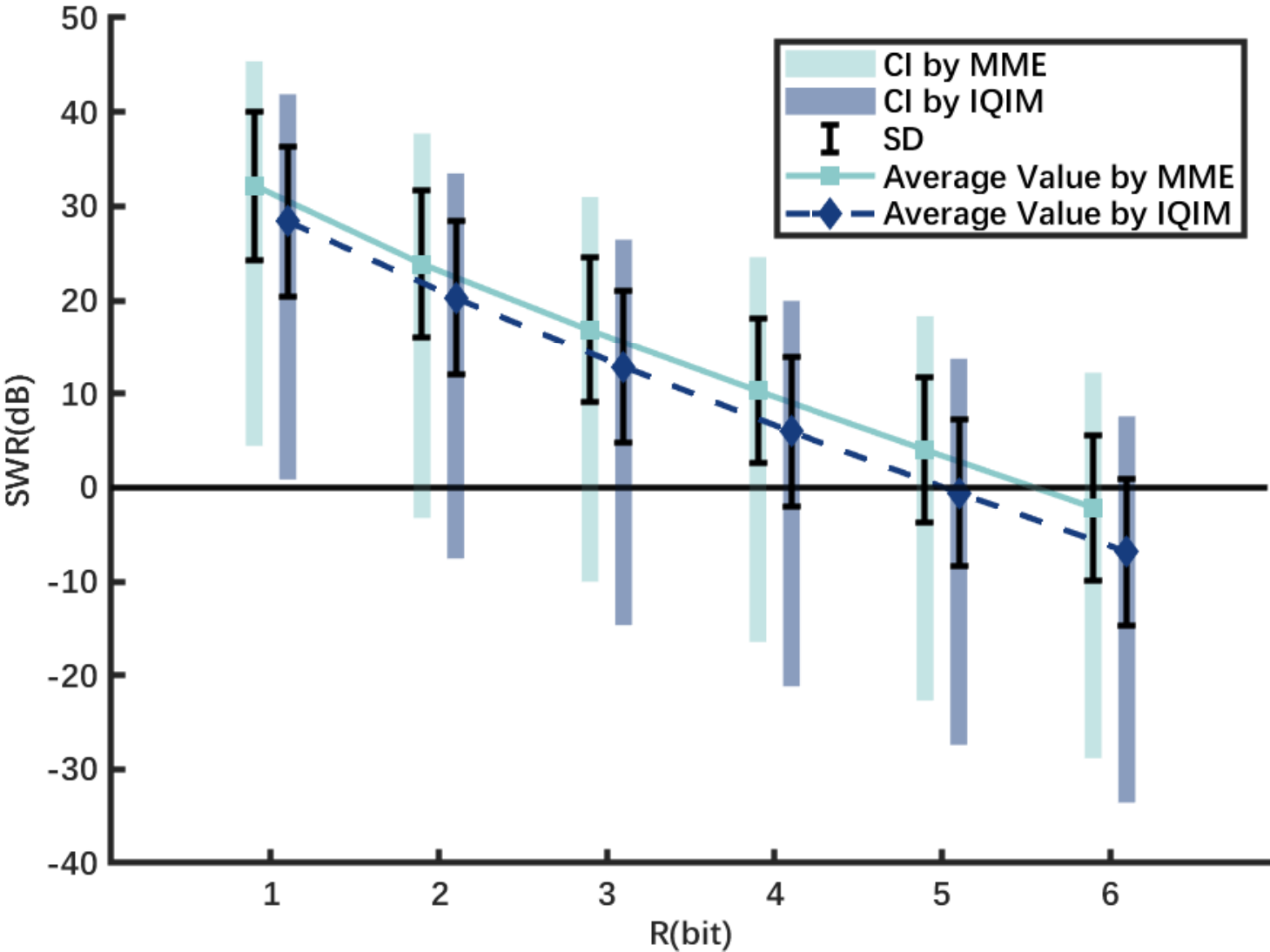}}}
	\quad
	\subfigure[]{\centering{\includegraphics[width=.5\textwidth]{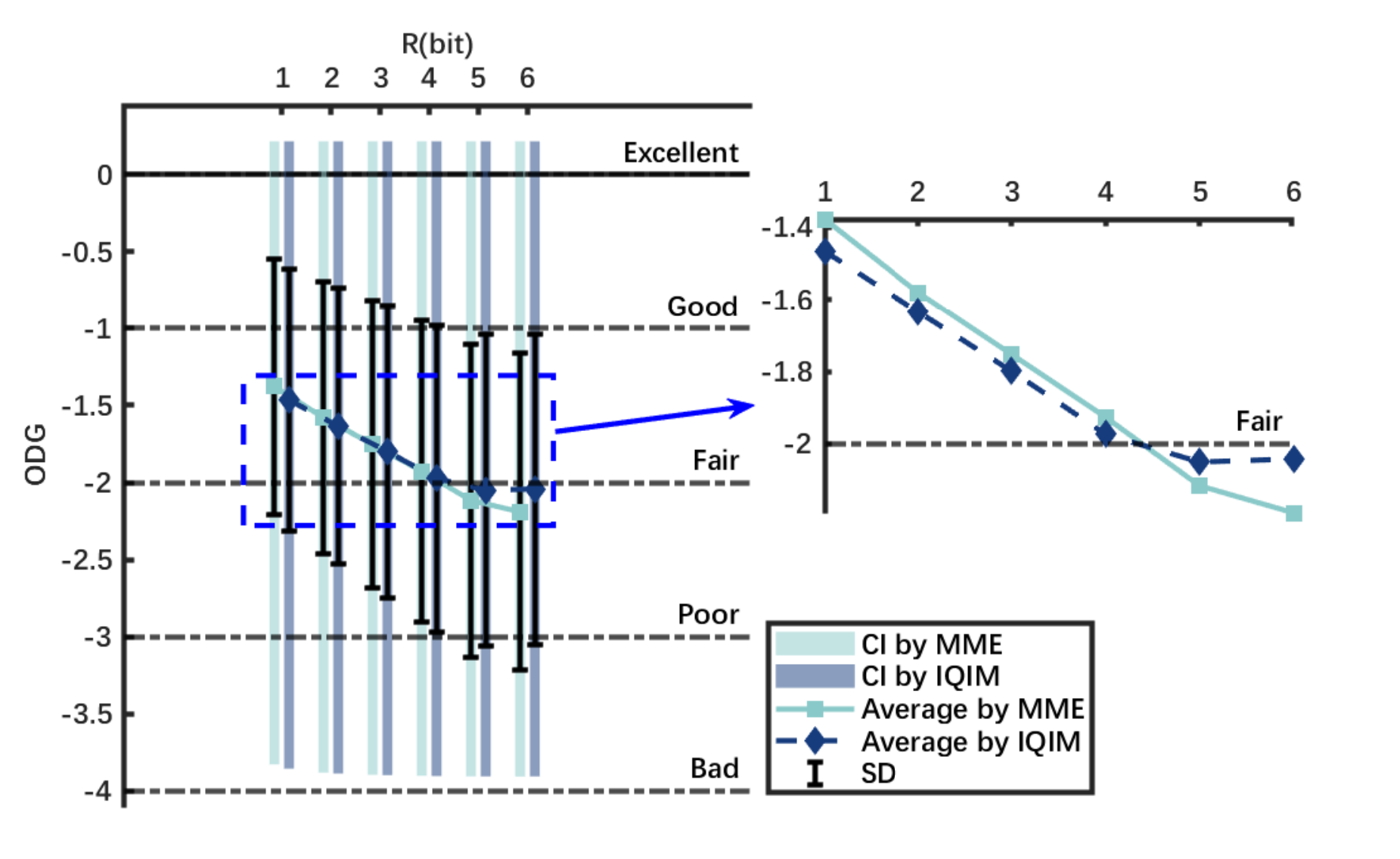}}}
	
	\caption{The impact of code rate $R$ of the performance of (a) SWR. (b)ODG.} 
	\label{Fig. experiment_1_2}
\end{figure}

More in-depth comparisons are made with the most related IQIM method hereby. 
Fig. \ref{Fig. experiment_1_2} shows the performance of SWR and ODG with different code rate $R$.
We have the following observations:
i) For the SWR performance, MME is strictly better than IQIM regardless of the code rate $R$.  
ii) In terms of ODG, when the performance is above the bar of ``fair'', MME generally outperforms IQIM, but falls short when both methods become unreliable.

The advantages of using more general lattices
for MME are further examined in Table \ref{ta_swr4}. In addition to the simple integer lattice $\mathbb{Z}$, the optimal lattices in $2$, $4$, and $8$ dimensions are respectively $A_2$, $D_4$ and $E_8$. The Conway-Sloane type labeling technique  \cite{conway2013sphere} can be used for these lattices.  Specifically, we have the following observations from the table. 
i)MME outperforms IQIM in all the metrics (SWR and ODG), regardless of the chosen type of lattices. For instance, the SWR of proposed lower bound in $A_2$ $(R=1)$ is $34.3771$dB, which is greater than the $28.3758$ SWR in IQIM.
ii) Choosing larger-dimensional optimal lattices is more desirable. Steady improvements can be observed by increasing the dimensions of the lattices from $\mathbb{Z}$ to $A_2$, $D_4$ and $E_8$.  
iii) Since the fine lattices are fixed to feature unit volumes while the coarse lattices grow as the code rate $R$ increases, the SWR/ODG decreases with $R$. 

\begin{table}[t!]
	\setlength{\tabcolsep}{1.5mm}
	\centering
	\renewcommand\arraystretch{1.5}
	\caption{The performance of MME with different lattices.}  
	\label{Tab. experiment 2}  
	\begin{tabular}{c c c c c c c }
		\hline
		\multirow{2}{*}{\textbf{Rate} $R$} & \multirow{2}{*}{\textbf{Metrics}} & \multicolumn{4}{c}{\textbf{MME with $\alpha=1-\frac{r_{pack}(\Lambda_f)}{r_{cov}(\Lambda_c)}$}} & \multicolumn{1}{c}{\multirow{2}{*}{\begin{tabular}[c]{@{}c@{}}\textbf{IQIM \cite{peng2011effective}} \end{tabular}}} \\ \cline{3-6}
		& & $\mathbb{Z}$ & $A_2$ & $D_4$ & $E_8$  & \multicolumn{1}{c}{}        
		\\ \hline
		\multicolumn{1}{c}{\multirow{4}{*}{1}} & $\alpha$  
		& 0.5 & 0.5670 & 0.6464 & 0.6464 
		& 0.5               
		\\ \cline{2-7}
		
		
		& SWR(dB)
		&34.5727	&34.3771	&34.2464	&36.0888	
		&28.3758
		
		\\ \cline{2-7} 
		
		& ODG 
		&-1.3185	&-1.3237	&-1.3316	&-1.2840	 
		&-1.4659
		\\ \hline
		
		\multicolumn{1}{c}{\multirow{4}{*}{2}} & $\alpha$
		&0.75&0.7835&0.8232&0.8232
		&0.75 
		\\ \cline{2-7}
		
		
		& SWR (dB)
		&25.2209	&25.75211	&26.3638	&28.3057	
		&20.2369
		\\ \cline{2-7} 
		
		& ODG
		&-1.5459	&-1.5290	&-1.5132	&-1.4723		
		&-1.6339
		\\ \hline
	\end{tabular}\label{ta_swr4}
\end{table}

\subsection{Robustness performance}
With the same setting as in Fig. \ref{Fig. experiment_1}-(a), we further investigate the robustness performance of the watermarking algorithms against AWGN attacks. Fig. \ref{BERfigure} has depicted the BER comparison for MME, Liang and IQIM. The Nishimura method has been excluded as its robustness performance is weak. 
The reason for  MME's better BER performance is that, by adjusting the scaling factor $\alpha$, it's watermarked vector can be tuned further away from the boarder of the decoding regions.

Fig. \ref{Fig. experiment_1_3}-(a) and  Fig. \ref{Fig. experiment_1_3}-(c) have drawn the GSNR for MME and IQIM for various SNR and $\alpha$. The GSNR of MME can go well beyond $10^{12}$, while IQIM flattens at about $3\times 10^{11}$. Fig. \ref{Fig. experiment_1_3}-(b) and Fig. \ref{Fig. experiment_1_3}-(d) are also attached to show the better GSNR performance of MME is not generated by scarifying the SWR performance.   
The simulation results above are remarkable: compared with other QIM-variants, MME can simultaneously achieve smaller distortion and better robustness. 

Since both the BER and SWR metrics are related to the $\Delta$ in MME, we plot the trade-off relations of BER versus SWR in Fig. \ref{RDfigure} by changing $\Delta$ and letting SNR=$25$dB. We have the following observations: i) The trade-off capability becomes better when $\Delta$ increases, where the curve rises smoothly when $\Delta=500$ and become sharper when $\Delta$ increases. ii) The ratio of SWR/BER decreases when $\Delta$ increases, which means larger distortion and higher robustness is achieved with a larger $\Delta$.

\begin{figure}[t!]
	\linespread{1}
	\centering{\includegraphics[width=.44\textwidth]{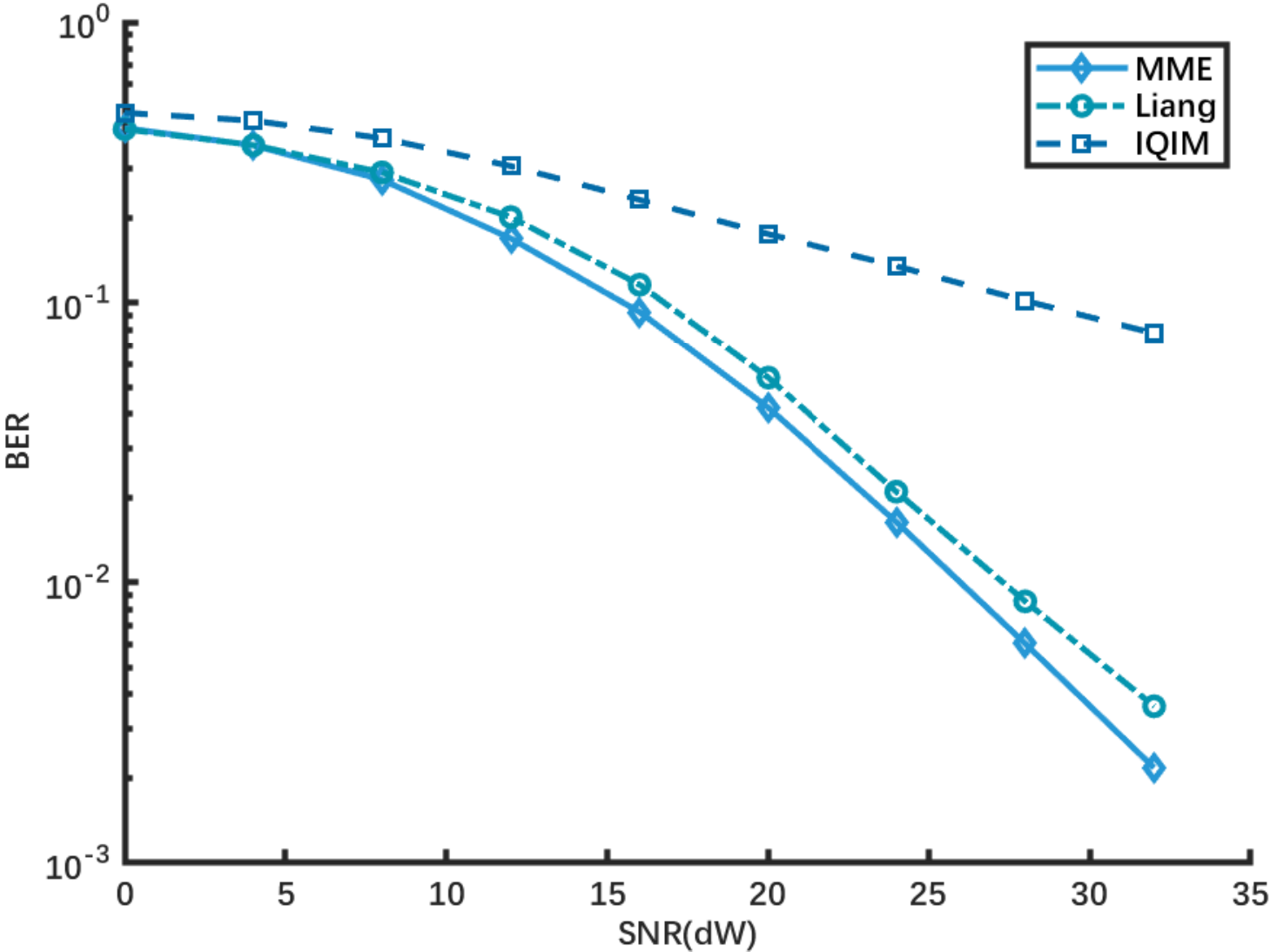}}
	\caption{BER versus SNR.} \label{BERfigure}
\end{figure}

\begin{figure}[t!]
	\linespread{1}
	\centering
	\subfigure[]{\centering{\includegraphics[width=.23\textwidth]{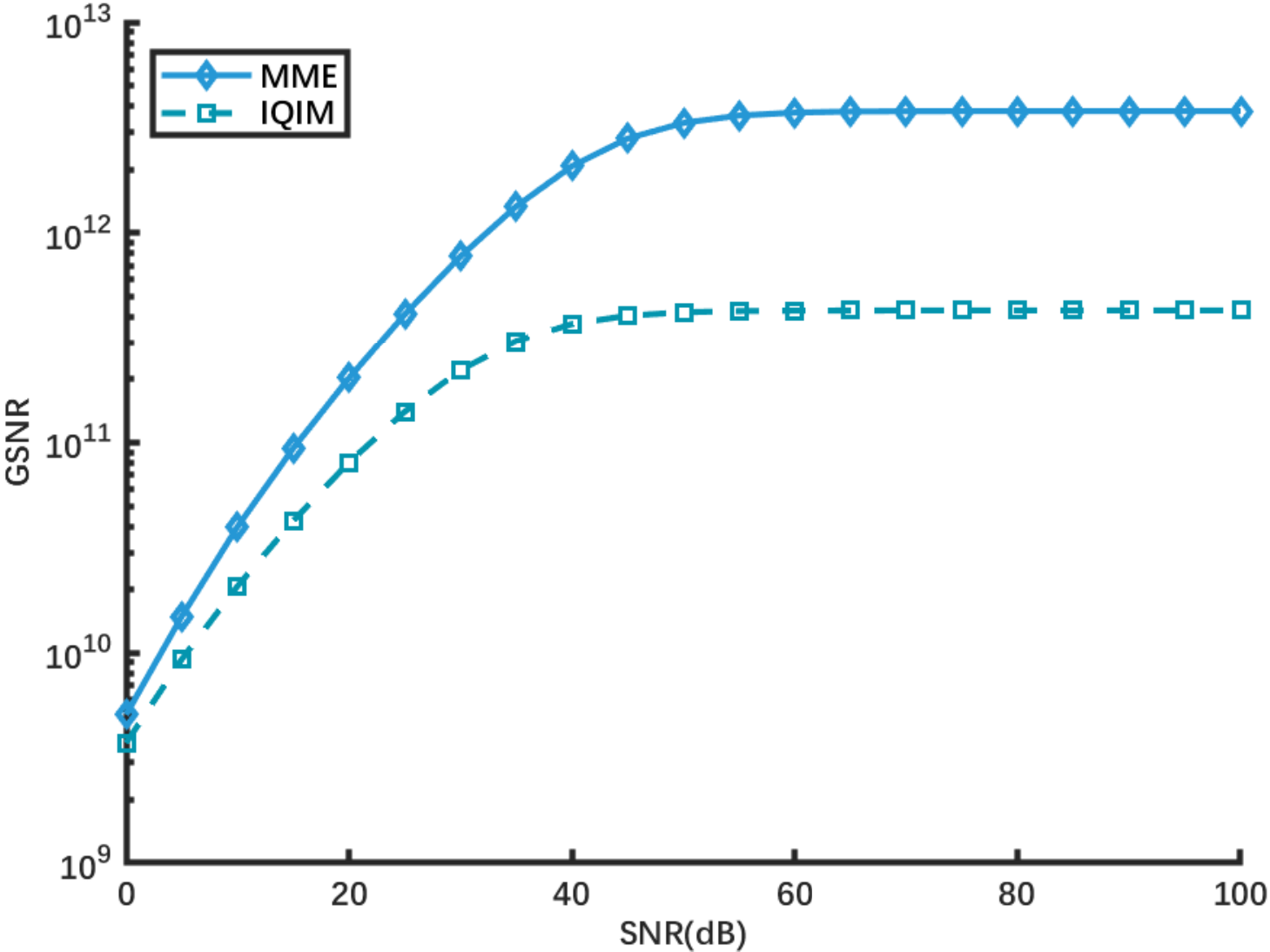}}}
	\subfigure[]{\centering{\includegraphics[width=.23\textwidth]{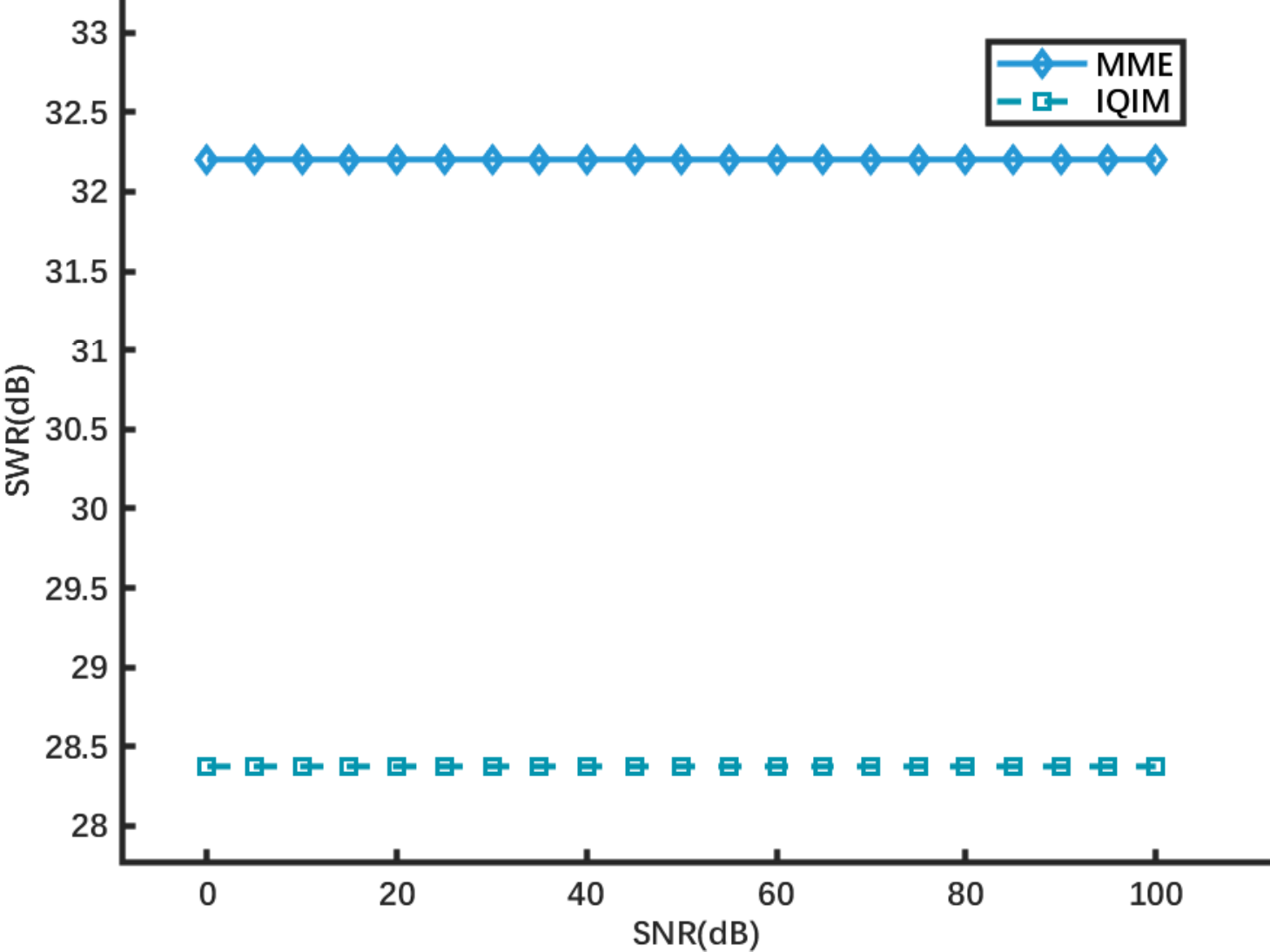}}}
	\subfigure[]{\centering{\includegraphics[width=.23\textwidth]{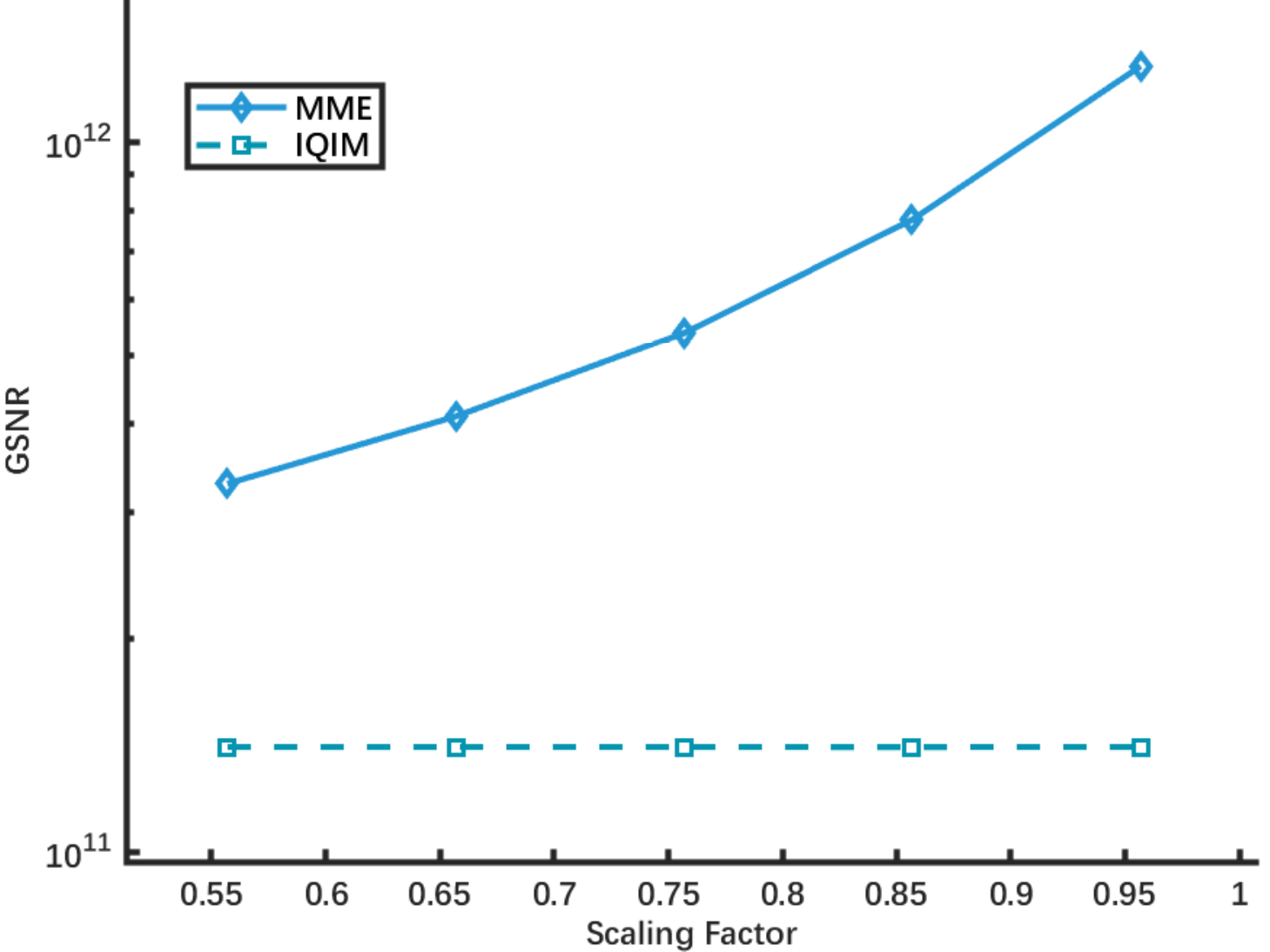}}}
	\subfigure[]{\centering{\includegraphics[width=.23\textwidth]{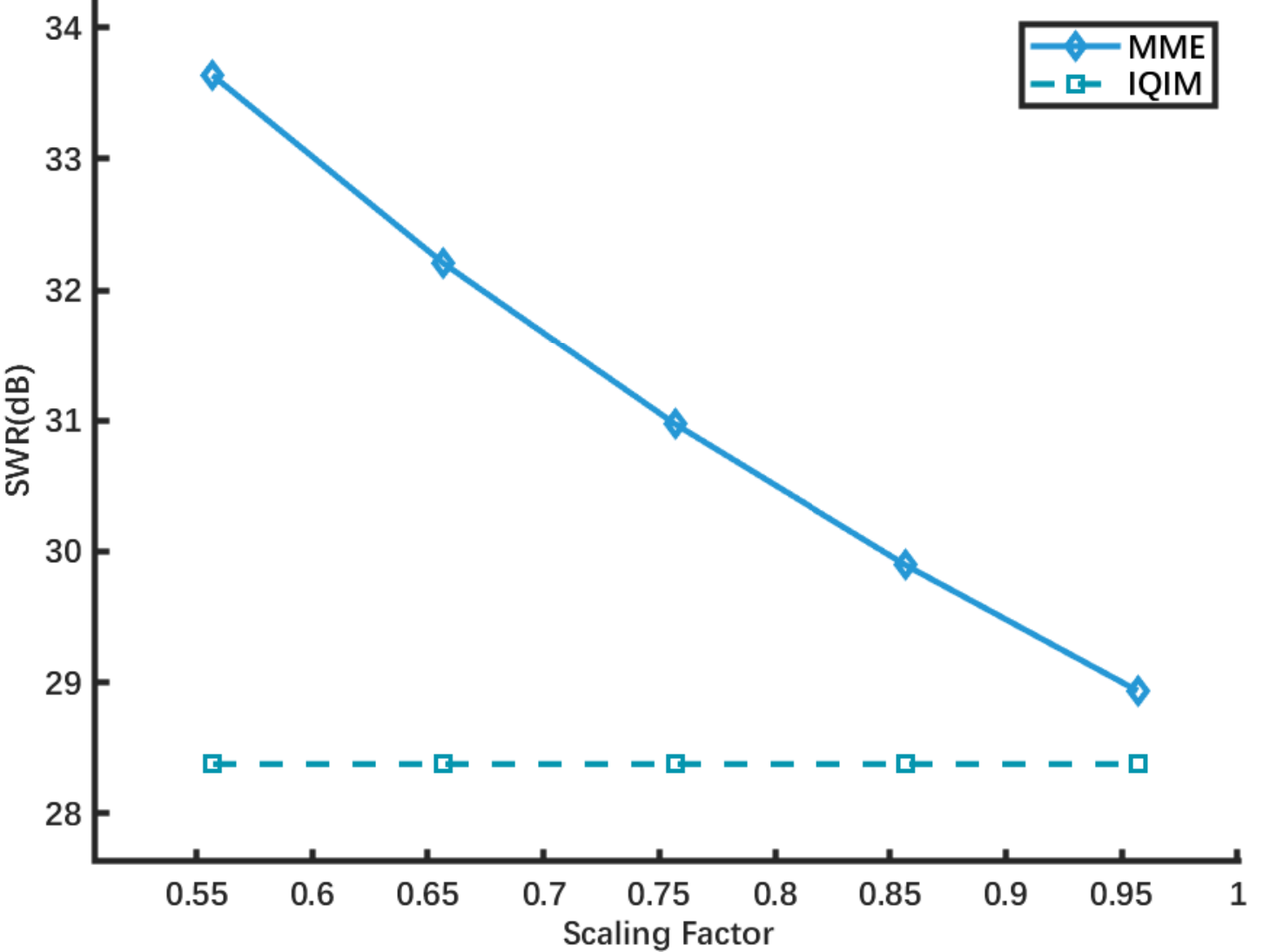}}}
	\caption{GSNR and SWR comparisons between MME and IQIM with different parameters. (a), (b) SNR ($\alpha=0.6569$, $R=1$). (c), (d) $\alpha$ (SNR$=25$, $R=1$).} 
	\label{Fig. experiment_1_3}
\end{figure}

\begin{figure}[t!]
	\linespread{1}
	\centering{\includegraphics[width=.44\textwidth]{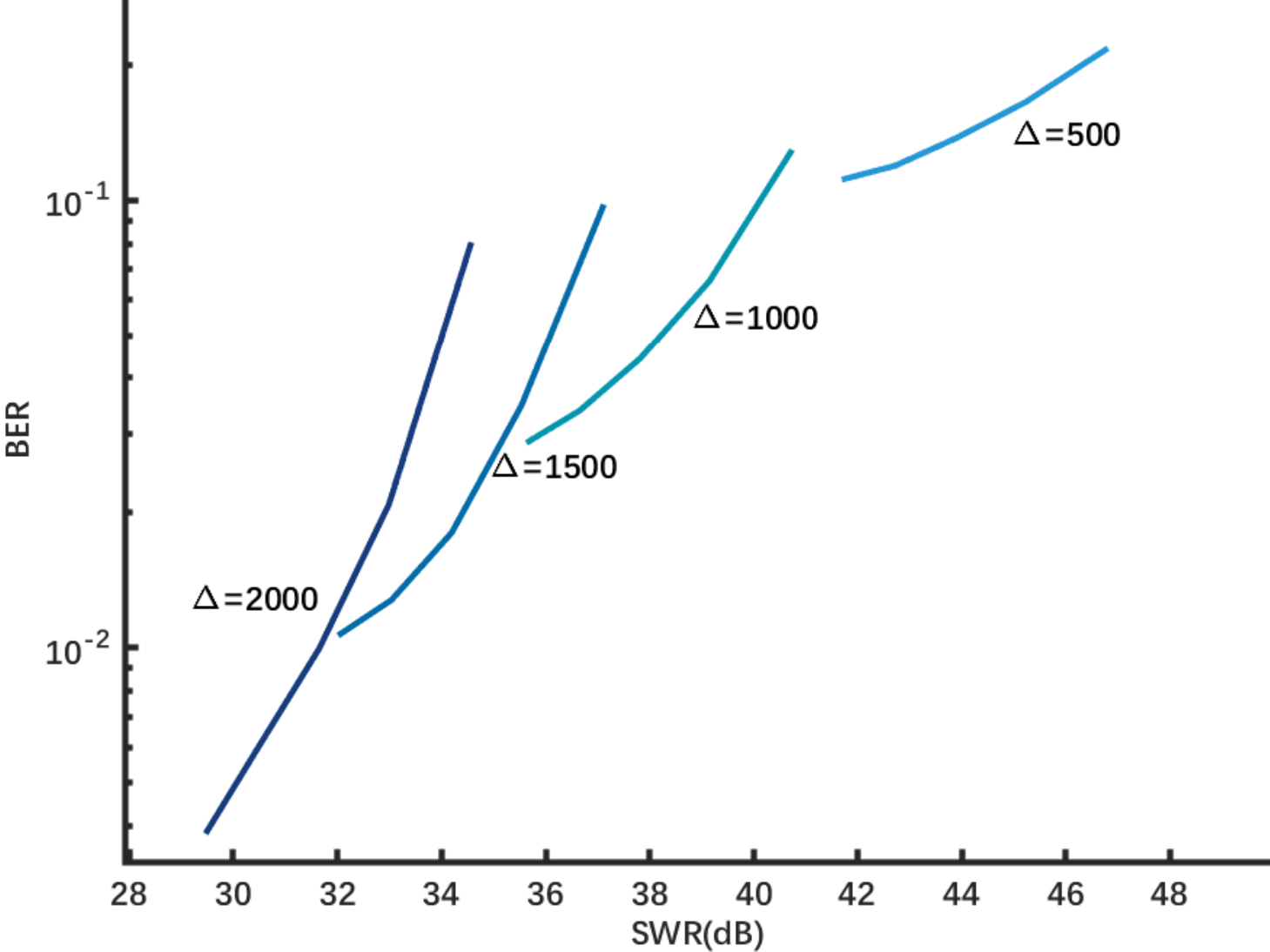}}
	\caption{BER versus SWR.} \label{RDfigure}
\end{figure}

\subsection{Reversibility}

\begin{figure*}[t!]
	\linespread{1}
	\centering
	\includegraphics[width=.9\textwidth]{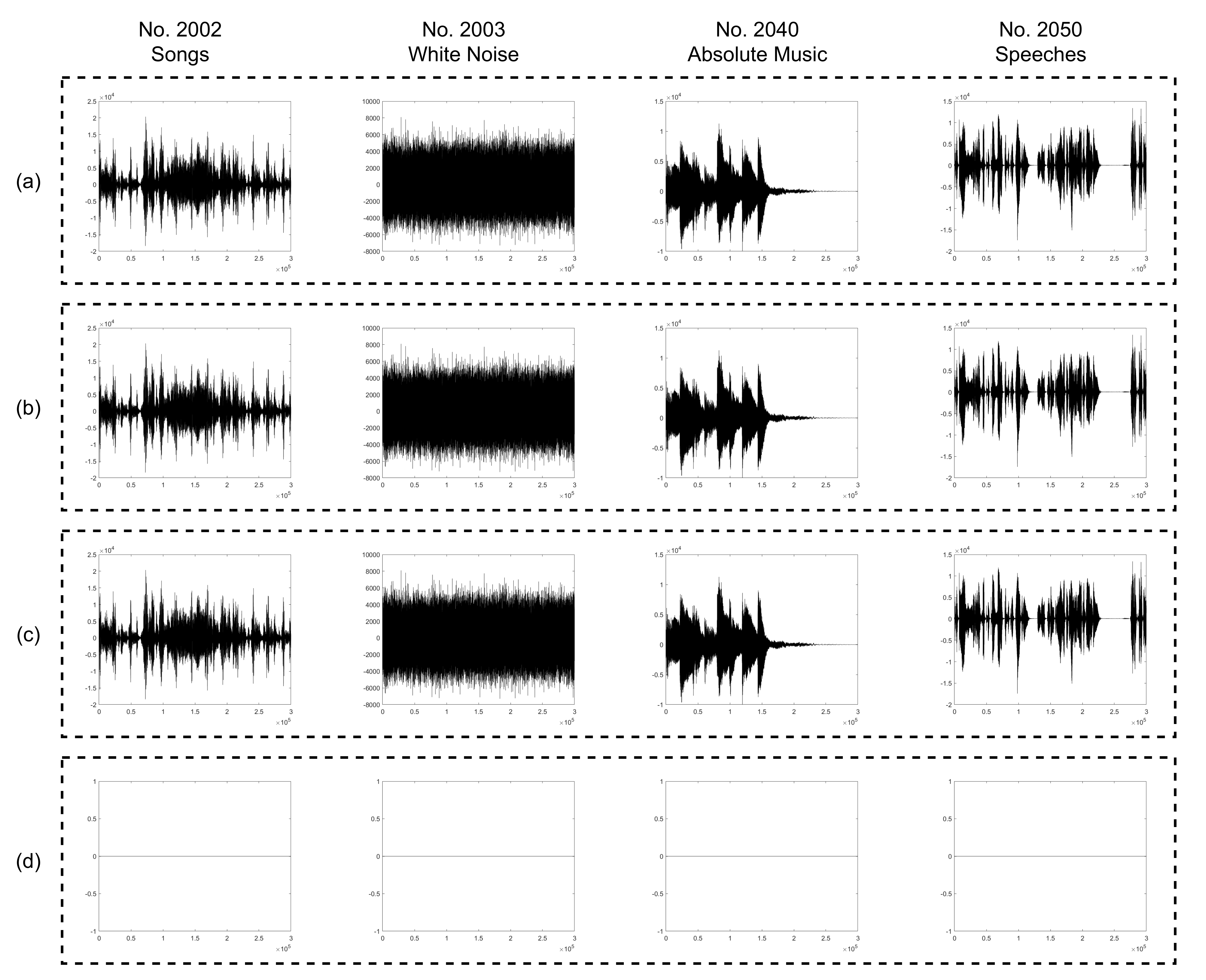}
	
	\caption{The signal waveforms at different stages of MME. 
		(a) Original signals. 
		(b) Watermarked signals by using MME, whose SWR are $38.0775$dB, $33.2487$dB, $34.0998$dB and	$35.1670$dB.
		(c) Restored signals.
		(d) The differences between the original signals and the restored signals.} 
	\label{Fig. experiment_4}
\end{figure*}

%

Due to the sensibility of HAS, the human awareness of sounds from different sources is also diverse, which may cause a distinct feeling for heterogeneous sounds.
To vividly demonstrate the actual effect of embedding, we draw the original, watermarked and restored signal of four typical audio host signals in Fig. \ref{Fig. experiment_4}. They are absolute music, songs, speeches and white noises, respectively. These host signals are labeled as No. 2002, 2003, 2040 and 2050.
Judging by the naked eye, the wave forms of the original and  the restored signals  are basically the same. In terms of watermarked signals, MME achieves the SWRs of  $38.0775$dB, $33.2487$dB, $34.0998$dB and $35.1670$dB for these host signals. Lastly, Fig. \ref{Fig. experiment_4}-(d) shows that are basically no difference between the original cover signals and the restored cover signals.

 
\section{Conclusions}
In this paper, a novel reversible audio watermarking method referred to as Meet-in-the-Middle Embedding (MME) has been proposed. We have justified its correctness, and make comparisons with existing methods, which show that MME exhibits less distortion and better robustness. Theoretical analyses on the SWR and GSNR have been presented.  Simulations show that the proposed method features a smaller amount of distortion and probability of decoding errors compared with   IQIM and other representative methods in RAW. 


%

\ifCLASSOPTIONcaptionsoff
  \newpage
\fi



%
\bibliographystyle{IEEEtranMine}
\bibliography{mybib}

%
%
%
%
%

\end{document}